\DeclareMathOperator{\Tr}{Tr}
\newcommand{\be}{\begin{equation}}
\newcommand{\ee}{\end{equation}}
\newcommand{\bea}{\begin{eqnarray}}
\newcommand{\eea}{\end{eqnarray}}
\def\squareforqed{\hbox{\rlap{$\sqcap$}$\sqcup$}}
\def\qed{\ifmmode\squareforqed\else{\unskip\nobreak\hfil
\penalty50\hskip1em\null\nobreak\hfil\squareforqed
\parfillskip=0pt\finalhyphendemerits=0\endgraf}\fi}
\def\endenv{\ifmmode\;\else{\unskip\nobreak\hfil
\penalty50\hskip1em\null\nobreak\hfil\;
\parfillskip=0pt\finalhyphendemerits=0\endgraf}\fi}
\newcommand{\tr}{\text{Tr}}
\newcommand{\ket}[1]{|#1\rangle}
\newcommand{\bra}[1]{\langle#1|}
\newcommand{\re}{\color{blue}}  
\newcommand{\blk}{\color{black}}
\newtheorem*{rep@theorem}{\rep@title}
\newcommand{\newreptheorem}[2]{%
\newenvironment{rep#1}[1]{%
 \def\rep@title{#2 \ref{##1}}%
 \begin{rep@theorem}}%
 {\end{rep@theorem}}}
\newtheorem{thm}{Theorem}
\newtheorem{definition}{Definition}
\newtheorem*{obs}{Observation}
\newtheorem{coro}{Corollary}
\begin{document}


\title{Limits of classical correlations and quantum advantages under (anti-)distinguishability constraints in multipartite communication
}


\author{Ankush Pandit}
\affiliation{School of Physics, Indian Institute of Science Education and Research Thiruvananthapuram, Kerala 695551, India}
\affiliation{Department of Physics, School of Basic Sciences, Indian Institute of Technology Bhubaneswar, Odisha 752050, India}
\author{Soumyabrata Hazra}
\affiliation{CQST and CCNSB, International Institute of Information Technology Hyderabad, Telangana 500032, India}
\author{Satyaki Manna}
\affiliation{School of Physics, Indian Institute of Science Education and Research Thiruvananthapuram, Kerala 695551, India}
\affiliation{Department of Physics, School of Basic Sciences, Indian Institute of Technology Bhubaneswar, Odisha 752050, India}
\author{Anubhav Chaturvedi}
\affiliation{Faculty of Applied Physics and Mathematics,
 Gda{\'n}sk University of Technology, Gabriela Narutowicza 11/12, 80-233 Gda{\'n}sk, Poland}
\affiliation{International Centre for Theory of Quantum Technologies (ICTQT), University of Gda{\'n}sk, 80-308 Gda\'nsk, Poland}
\author{Debashis Saha}
\affiliation{School of Physics, Indian Institute of Science Education and Research Thiruvananthapuram, Kerala 695551, India}
\affiliation{Department of Physics, School of Basic Sciences, Indian Institute of Technology Bhubaneswar, Odisha 752050, India}


\begin{abstract}
We consider communication scenarios with multiple senders and a single receiver. Focusing on communication tasks where the distinguishability or antidistinguishability of the sender's input is bounded, we show that quantum strategies—without any shared entanglement—can outperform the classical ones. We introduce a systematic technique for deriving the facet inequalities that delineate the polytope of classical correlations in such scenarios. As a proof of principle, we recover the complete set of facet inequalities for some nontrivial scenarios involving two senders and a receiver with no input. Explicit quantum protocols are studied that violate these inequalities, demonstrating quantum advantage. We further investigate the task of antidistinguishing the joint input string held by the senders and derive upper bounds on the optimal classical success probability. Leveraging the Pusey–Barrett–Rudolph theorem, we prove that when each sender has a binary input, the quantum advantage grows with the number of senders. We also provide sufficient conditions for quantum advantage for arbitrary input sizes and illustrate them through several explicit examples.

\end{abstract}


\maketitle

\re 

\blk 

\section{Introduction}
Communication tasks play a central role both in quantum information science and in foundational research. Under varied communication constraints, quantum resources and protocols routinely surpass their classical counterparts. Notable examples include dimension-constrained communication complexity problems \cite{ccbook,TRbook,rao_yehudayoff_2020,Yao,saha2023}, parity-oblivious multiplexing \cite{PhysRevLett.102.010401, Ambainis2019}, oblivious communication task \cite{Saha_2019,saha2019,Chaturvedi2021characterising, Hazra2026efficient}, and the broad family of random-access codes \cite{10.1145/1008908.1008920,10.1145/301250.301347,Saha_EPL,PhysRevLett.121.050501,Hameedi_PRA,RACA,RACA1}. Random access codes, in particular, underpin the semi-device-independent one-way quantum key distribution \cite{PhysRevA.84.010302}, and randomness certification \cite{PhysRevA.84.034301,PhysRevX.6.011020,qrandom}. Beyond these practical relevance, these tasks mirror the prepare-and-measure scenarios that are widely studied in quantum foundations. For instance, the quantum advantage observed in parity-oblivious multiplexing furnishes an operational proof of preparation contextuality in quantum theory \cite{PhysRevLett.102.010401}. 


Although most communication tasks are traditionally analyzed by placing an upper bound on the alphabet size of a classical message or the Hilbert space dimension of a quantum message, another approach is to restrict the message's distinguishability and antidistinguishability \cite{bod,PRR_2024,Tavakoli2020informationally,epi_incom,ray2024epistemicmodelexplainantidistinguishability}. Distinguishability is the maximal probability that a receiver can correctly infer the sender's input from the message, thus quantifying the information revealed. Antidistinguishability, by contrast, is the maximal probability of deliberately guessing the input incorrectly. Because these quantities vary continuously, they furnish a finer-grained description of information content than the coarse, discrete notion of dimension \cite{bod,Tavakoli2020informationally}. In cryptographic settings, where privacy is paramount, limiting distinguishability offers a natural operational gauge of how effectively the sender's input remains hidden. 
\sloppy
In multipartite communication, dimension-based studies have uncovered quantum-over-classical advantages in multipartite communication tasks, without the use of entanglement \cite{QF,Vertesi-pra,Marcin-pra,Hameedi_PRA,Saha_EPL,Chakraborty_2025}. However, it is not yet clear how quantum advantage manifests when the communication is constrained by distinguishability and antidistinguishability rather than by message dimension.

Here we introduce multipartite communication tasks in which each sender’s message is restricted solely by an upper bound on the probability that the receiver can (or cannot) infer that sender’s input—i.e., by its distinguishability or antidistinguishability. Crucially, such bounds place no a priori limit on the Hilbert-space dimension of the underlying physical system. Beyond their conceptual appeal, these correlations are experimentally accessible with current high-fidelity prepare-and-measure platforms \cite{PhysRevApplied.18.064028,PRXQuantum.3.010352}. Another important feature of our communication tasks is that the receiver has no input. While two-party communication tasks are known not to exhibit any quantum advantage in such scenarios \cite{fw}, we observe clear quantum advantages in multi-sender–single-receiver settings. It is worth emphasizing that our quantum protocol does not rely on any pre-shared entangled resources. 

In this article, we study a multipartite communication scenario with $N$ independent senders and a single receiver, where each sender’s message is constrained by a bound on either its distinguishability or its antidistinguishability. Upon receiving these messages, classical or quantum, the receiver performs a predetermined measurement to generate an output distribution. First, we develop a framework to characterize the set of achievable classical probability distributions without fixing the numerical value of distinguishability (or antidistinguishability) in advance; in other words, our description encompasses all valid classical communication protocols for any allowed distinguishability (or antidistinguishability) values. We use this method to explicitly enumerate the facet inequalities of classical polytopes on a tripartite scenario with two independent senders and a single receiver, where either distinguishability or antidistinguishability of each sender’s input bounded and the receiver’s measurement choice is fixed.

Next, we turn to the quantum case. We adopt semi-definite programming (SDP) methods introduced in \cite{Chaturvedi2021characterising,Tavakoli2022informationally,PRR_2024} to the multipartite setting and for the antidistinguishability constraint. As result, we recover several instances of quantum advantage in the form of quantum violation of the previously recovered facet inequalities. Building on these observations, we define a multipartite task whose figure of merit is the antidistinguishability of the joint inputs held by the distributed senders. Drawing inspiration from the Pusey–Barrett–Rudolph (PBR) Theorem \cite{Pusey_2012}, we construct a quantum protocol that achieves an exponential advantage over any classical strategy as the number of senders grows, when each sender has two possible inputs. 

In addition, we derive a simple sufficient condition, namely a bound on the pairwise overlaps of each sender's input states, that ensures a quantum advantage in this task for arbitrary input sizes. Finally, we show that any quantum advantage in such multipartite communication scenarios implies an epistemic incompleteness of quantum theory \cite{epi_incom}, under the assumption of preparation independence.

 \section{Multipartite communication with bounded Distinguishability or Antidistinguishability} \label{multiparty communication}


\begin{figure}
    \centering
    \includegraphics[width=1\linewidth]{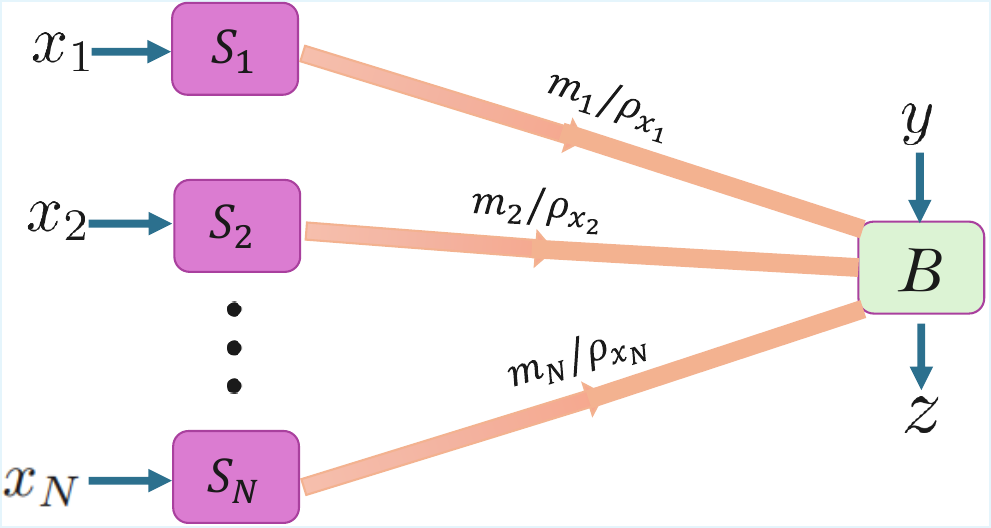}
    \caption{Schematic diagram of a multipartite communication scenario. Each sender $S_i$ receives an input $x_i$ and communicates a message, either classical or quantum ($m_i$ or $\rho_{x_i}$) to the receiver $B$. The receiver, in turn, performs some measurement based on the input $y$ and produces an outcome $z$.}
    \label{fig:enter-label}
\end{figure}
 
We consider a multipartite communication scenario that comprises $N$ senders, denoted by $S_i,~ i \in \left[N\right]$ and one receiver. Here $[N]$ represents the set $\{1,2,...,N\}$. Each sender $S_i$ receives an input $x_{i}\in \left[n_{x_{i}}\right]$ and independently communicates a message to the receiver depending on the input. The senders are not allowed to communicate among themselves. Receiver is also given an input $y\in\left[n_{y}\right]$. Depending on all the received messages and the input $y$, the receiver provides an output $z\in [n_z]$. The constraint of the task lies in the communicated message such that the probability of distinguishing (or antidistinguishing) the inputs of each sender is bounded by some value. This communication scenario gives rise to correlations between senders and the receiver which can be fully characterized by the operational probabilities of the form $p(z|\vec{x},y)$ where $\vec{x}=(x_1,...,x_N)$ describes a collection of inputs from all the senders.
It is worth noting that we consider a general scenario for the subsequent discussion in this section. However, all the explicit scenarios we analyze to demonstrate quantum advantages involve a fixed input on the receiver's end.

The distinguishability (or antidistinguishability) of each sender $S_i$'s input is upper bounded by $D_i$ (or $A_i$). 
\begin{definition}[Distinguishability of inputs] The distinguishability of a collection of inputs $\{x_i\}_{x_i=1}^{n_{x_i}}$ is said to be bounded by $D_i$ if they satisfy  \begin{align}\max_R \sum_{o_i} \max_{x_i}~ q_{x_i} p\left(o_i|x_i,R\right) \leq D_i, \label{op33} 
\end{align}
where the input $x_i$ is assumed to be sampled from the distribution $\{q_{x_i}\}_{x_i}$ and $o_i$ denotes the outcome of measurement $R$. 
\end{definition} This essentially says that, corresponding to an output $o_i$ the best guess of input $x_i$ is the one which maximizes $q_{x_i} p\left(o_i|x_i,R\right)$ over the index $x_i$. Then maximize it over all possible measurements $R$.
\begin{definition}[Antidistinguishability of inputs]\label{def:anti} The antidistinguishability of a collection of inputs $\{x_i\}_{x_i=1}^{n_{x_i}}$ is said to be bounded by $A_i$ if they satisfy \begin{align}\label{anti22}
    1-\min_R \sum_{o_i} \min_{x_i} q_{x_i}p\left(o_i|x_i,R\right) \leq A_i.
\end{align}
\end{definition} Here the strategy is, corresponding to an output $o_i$, the best guess of input $x_i$ is the one that minimizes $q_{x_i} p\left(o_i|x_i,R\right)$ over the index $x_i$. Then minimize it over all possible measurements $R$. 

Note that $\{q_{x_i}\}_{x_i=1}^{n_{x_i}}$ represents an $a priori$ probability distribution over $n_{x_i}$ inputs of $S_i$. Being a valid probability distribution $\{q_{x_i}\}_{x_i}$ satisfies $q_{x_i}\geq 0 ~\forall x_i; \sum_{x_i=1}^{n_{x_i}}q_{x_i}=1 ~\forall i$. The independence of senders implies that the probability of obtaining a joint outcome $\vec{o}=(o_1,...,o_N)$ while $\vec{R}=(R_1,...,R_N)$ is a collection of individual measurements corresponding to input $\vec{x}$, can be decomposed as a product of individual outcome probabilities $\{p(o_i|x_i,R_i)\}$: \begin{align}\label{independence of senders}
    p(\vec{o}|\vec{x},\vec{R})=\prod_{i=1}^N p(o_i|x_i,R_i).
\end{align} 
It is important to note that, performing local measurements $R_i$ is only required to describe the constraints of the communication task. However, the receiver is allowed to perform global measurements. The optimal measurements that saturate the inequalities \eqref{op33}, and \eqref{anti22} may not be part of the receiver's choice of measurement. The distinguishability \eqref{op33} and antidistinguishability constraints \eqref{anti22} can be expressed in terms of the operational probabilities as  \begin{align}
    \max_y\sum_z \max_{x_i}~ q_{x_i} p\left(z|x_i,y\right) &\leq D_i ~\forall i ,\\ \text{and~~}~
    1- \min_y\sum_z \min_{x_i} q_{x_i}p\left(z|x_i,y\right) &\leq A_i ~\forall i, 
\end{align} respectively. The independence of senders asserts the following to be true
\begin{align}
    \max_y\sum_z \max_{x_i}~ q_{x_i} p\left(z|\vec{x},y\right) &\leq D_i ~\forall i ,\label{op3}\\ \text{and~~}~
    1- \min_y\sum_z \min_{x_i} q_{x_i}p\left(z|\vec{x},y\right) &\leq A_i ~\forall i. \label{anti2}
\end{align} Fundamentally, the operational probabilities satisfy positivity and normalization conditions: 
\begin{align}
    p\left(z|\vec{x},y\right) &\geq 0, ~\forall z,\vec{x},y \label{op1},\\
    \sum_z p\left(z|\vec{x},y\right) &= 1, ~\forall \vec{x},y . \label{op2}
\end{align}

The set of correlations $p\left(z|\vec{x},y\right) $ satisfying Eqs. \eqref{op3},\eqref{op1}, and \eqref{op2} describes a convex polytope $\mathbb{D}$ that does not assume any underlying theory of the communication model. Analogously, the set of correlations satisfying Eqs. \eqref{anti2}-\eqref{op2} forms a convex polytope $\mathbb{A}$. It is possible to construct a more general set of correlations where $D_i\left(A_i\right)~\forall i$ are treated as bounded variables instead of a constant. Distinguishability variables can be bounded by 
\begin{align}\label{op4}
    \max_{x_i} q_{x_i} \leq D_i \leq 1, ~ \forall i.
\end{align} 
An ensemble of $n_{x_i}$ possible choices for $x_i$ can always be distinguished with probability atleast $\max_{x_i} q_{x_i}$ with a strategy always to output a fixed index that is assigned highest probability according to the distribution $q_{x_i}$. Antidistinguishability variables can be bounded by \begin{align}\label{anti3}
    1-\min_{x_i} q_{x_i} \leq A_i \leq 1, ~
    \forall i.
\end{align} 
An ensemble of $n_{x_i}$ possible choices for $x_i$ can always be antidistinguished with probability atleast $1-\min_{x_i} q_{x_i}$ with a strategy always to output a fixed index that is assigned lowest probability according to the distribution $q_{x_i}$. \begin{definition}[Operational distinguishability (or antidistinguishability) polytope]
    The operational probabilities satisfying the constraints \eqref{op3}, and \eqref{op1}-\eqref{op4} (or \eqref{anti2}-\eqref{op2}, and \eqref{anti3}) describe a convex polytope $\mathbb{D}^+ $ (or $\mathbb{A}^+)$ such that any valid choices of the variables $D_i (\text{~or~} A_i)$s in $\mathbb{D}^+$ (or $\mathbb{A}^+)$ reduce it to a corresponding polytope $\mathbb{D}$ (or $\mathbb{A})$.
\end{definition}  
Essentially, $\mathbb{D}^+$ and $\mathbb{A}^+$ encompass all the polytopes $\mathbb{D}$ and $\mathbb{A}$ respectively, for all valid fixed choices of the variables $D_i$ and $A_i$. The convexity of $\mathbb{D}^+$ can be easily verified as $\alpha\left(\{p'(z|\vec{x},y)\}, \vec{D}'\right)+(1-\alpha)\left(\{p''(z|\vec{x},y)\}, \vec{D}''\right) \in \mathbb{D}^+$ if $\left(\{p'(z|\vec{x},y)\}, \vec{D}'\right)$ and $\left(\{p'(z|\vec{x},y)\}, \vec{D}'\right)$ belongs to $\mathbb{D}^+$ where $\alpha \in [0,1]$. 

Having introduced the general framework, we analyze the scenario from the perspective of classical communication. In the following section, we provide a complete picture of the set of classical correlations attainable in multipartite communication scenarios constrained by the distinguishability (or antidistinguishability) of senders' input.

\subsection{Characterizing the set of multipartite classical correlations} \label{secA}
Each sender $S_i$ encodes a message $m_i$ based on the input 
index $x_i\in[n_{x_i}]$ according to an encoding probability $\{p_e\left(m_i|x_i\right)\}$. The encoding probabilities fundamentally satisfy the positivity and normalization condition,
\begin{align}
    p_e\left(m_i|x_i\right) &\geq 0  ~\forall m_i,x_i,i ,\label{en1}\\ \sum_{m_i}p_e\left(m_i|x_i\right) &= 1 ~\forall x_i,i. \label{en2}
\end{align}
The receiver obtains an outcome $z\in[n_z]$ conditioned on receiving an $N$-tuple message $\vec{m}=(m_1,...,m_N)$ and measurement choice $y\in [n_y]$, inducing a conditional decoding probability distribution $\{p_d\left(z|\vec{m},y\right)\}$. Respecting the positivity and normalization condition, the decoding probabilities satisfy
\bea
    p_d\left(z|\vec{m},y\right) \geq 0, &&\forall z,\vec{m},y, \label{de1}\\ 
    \sum_z p_d\left(z|\vec{m},y\right) = 1, &&\forall \vec{m},y .\label{de2}
\eea
The classical correlations can be characterized using the following decomposition of 
operational probabilities, 
\begin{align}\label{pcl}
    p\left( z|\vec{x},y\right)=\sum_{\vec{m}} p_d\left(z|\vec{m},y\right)\prod_{i=1}^N p_e\left(m_i|x_i\right).
\end{align} 
More generally, corresponding to each sender, we can write \begin{align}\label{pcl_mg}
    p(z_i|x_i,M) = \sum_{m_i}p_d(z_i|m_i,M)p_e(m_i|x_i),
\end{align} where the decoding probabilities are considered for all possible $n_{x_i}$-outcome measurement $M$. Note that, here $M$ is not necessarily a part of the receiver's choice of measurements.
Using this expression in the distinguishability constraint \eqref{op33}, we obtain 
\begin{align}\label{dpzmxy}
    \max_M\sum_{z_i}\max_{x_i}q_{x_i}\sum_{m_i}p_d(z_i|m_i,M)p_e(m_i|x_i) \leq D_i .
\end{align}
Note that for each choice of $m_i$ the decoding probabilities satisfying \eqref{de1}, and \eqref{de2} form a convex polytope whose extremal points are characterized by deterministic decoding strategies, i.e., $p_d\left(z_i|m_i,M\right) =0~\forall z_i$ except for a specific output $z_i^*$ such that $p_d\left(z_i^*|m_i,M\right) =1$. For each $m_i$ opting for a deterministic decoding strategy, the optimization over $x_i$ can be realized by maximizing 
    $q_{x_i}p_e\left(m_i|x_i\right)$ over the index $x_i$. This leads to a simpler reformulation of \eqref{dpzmxy} as 
    \begin{align}\label{dis}
    \sum_{m_i}\max_{x_i}~ q_{x_i} p_e\left(m_i|x_i\right) &\leq D_i, ~\forall i .
\end{align}
Similarly substituting \eqref{pcl_mg} in the antidistinguishability constraint \eqref{anti22}, we obtain \begin{align}\label{dazmxy}
    1-\min_M\sum_{z_i}\min_{x_i}q_{x_i}\sum_{m_i} p_d\left(z_i|m_i,M\right)p_e\left(m_i|x_i\right) \leq A_i .
\end{align} 
    Choosing the same deterministic decoding strategy, optimization over $x_i$ can be realized by minimizing
$q_{x_i}p_e\left(m_i|x_i\right)$ over the index $x_i$ for each $m_i$. This strategy leads to a simpler expression of \eqref{dazmxy} as
\begin{align} 
    1-\sum_{m_i} \min_{x_i}~ q_{x_i} p_e\left(m_i|x_i\right) \leq A_i ~\forall i . \label{antii}
\end{align}
Moving to a simpler scenario considering $\{q_{x_i}\}_{x_i}$ to be a uniform distribution over the $n_{x_i}$ possible choices, \eqref{dis} and \eqref{antii} can be rephrased respectively as \begin{align}\label{diseq}
  \sum_{m_i} \max_{x_i}~ p_e\left(m_i|x_i\right) \leq n_{x_i}D_i, ~\forall i , \\ n_{x_i}-\sum_{m_i} \min_{x_i}~  p_e\left(m_i|x_i\right) \leq n_{x_i} A_i, ~\forall i .  
\end{align} For all the multipartite scenarios studied in Sec.\ref{sec3} we consider uniform distribution of inputs, i.e., $q_{x_i}=1/n_{x_i}$. However, for the rest of the discussion in this section, $\{q_{x_i}\}$ can be any arbitrary probability distribution.

\begin{definition}[Classical distinguishability (or antidistinguishability) polytope]\label{def5}
The set of correlations arising from the assumption of classical theory as the underlying physical process to explain the operational probabilities of $\mathbb{D}(\text{or~}\mathbb{A})$ manifests a convex polytope $\mathbb{D}_C(\text{or~}\mathbb{A}_C)$. A $\left(n_zn_y\prod_{i=1}^Nn_{x_i}\right)$-tuple $p(z|\vec{x},y)\in\mathbb{D}_C(\text{or~}\mathbb{A}_C)$ if the underlying encoding and decoding variables satisfy the constraints \eqref{en1}--\eqref{pcl} and \eqref{dis}[\text{or~}\eqref{en1}--\eqref{pcl} and \eqref{antii}].   
\end{definition}
In general $\mathbb{D}_C \subseteq \mathbb{D} \text{~and~}\mathbb{A}_C \subseteq \mathbb{A}$ as $\mathbb{D}_C \text{~and~}\mathbb{A}_C$ imposes further constraints on the communication strategies of senders.
\begin{definition}[Extended classical distinguishability (or antidistinguishability) polytope]
    Similar to the extension of $\mathbb{D}(\text{or~}\mathbb{A})$ to $\mathbb{D}^+ (\text{or~}\mathbb{A}^+)$, the classical polytope $\mathbb{D}_C (\text{or~}\mathbb{A}_C)$ can be extended to $\mathbb{D}_C^+ (\text{or~} \mathbb{A}_C^+)$ by considering all $D_i (\text{or~} A_i)$s as bounded variables as described in \eqref{op4} [or \eqref{anti3}]. A $\left(N+n_z n_y\prod_{i=1}^N n_{x_i}\right)$-tuple $\left(\{p\left(z|\vec{x},y\right)\}, \vec{D} \right) \in \mathbb{D}_C^+\left[\text{or~}\left(\{p(z|\vec{x},y)\},\vec{A}\right)\in\mathbb{A}_C^+\right]$ if the underlying encoding and decoding probabilities satisfy the respective constraints of Def. \ref{def5} along with the bound \eqref{op4} [or \eqref{anti3}].
\end{definition}
 Here $\vec{D}=(D_1,...,D_N)$ and $\vec{A}=(A_1,...,A_N)$ respectively represents a collection of distinguishabilities and antidistinguishabilities of all the senders. 

The polytope $\mathbb{D}_C^+ \left(\text{~or~}\mathbb{A}_C^+\right)$ essentially emerges from product of two types of polytope, one associated with the encoding variables $\left(p_e\left(m_i|x_i\right), D_i\right)\left[\text{~or~}\left(p_e\left(m_i|x_i\right), A_i\right)\right]$ and another with the decoding variables $p_d\left(z|\vec{m},y\right)$. 
\begin{definition}[Encoding distinguishability polytope]
    Encoding variables $p_e\left(m_i|x_i\right)$ along with distinguishability variables $D_i$, satisfying a set of linear equalities and inequalities (\ref{op4}),(\ref{en1}),(\ref{en2}), and (\ref{dis}) form a polytope $\mathbb{E}_D$. 
\end{definition} Any interior point $\left(\vec{p}_e, \vec{D}\right)$ of $\mathbb{E}_D$ can be written as a convex mixture of its extremal points, labeled by $v$: 
\begin{align}
    \left(\vec{p}_e, \vec{D}\right) = \sum_v \lambda_v \left( \vec{p}_{e_v},\vec{D}_v\right),  \\ \text{where}~~ \lambda_v > 0, ~\forall v, ~\sum_v \lambda_v = 1.
\end{align} Here $\vec{p}_e=(p(m_1=1|x_1),\cdots,p(m_1=n_{m_1}|x_1),p(m_2=1|x_2),\cdots,p(m_N=n_{m_N}|x_N))$ is a $\left(\sum_in_{m_i}n_{x_i}\right)$-tuple of all the encoding probabilities where $n_{m_i}$ represents no. of possible values of the message $m_i$. 
\begin{definition}[Encoding antidistinguishability polytope]
    Encoding variables along with antidistinguishability variables $A_i$, satisfying a set of linear equalities and inequalities (\ref{anti3}),(\ref{en1}),(\ref{en2}),(\ref{antii}) form a polytope $\mathbb{E}_A$ .
\end{definition} Any interior point $\left(\vec{p}_e^{\prime}, \vec{A}\right)$ of $\mathbb{E}_A$ can be written as a convex mixture of its extremal points, labeled by $w$: \begin{align}
    \left(\vec{p}_e^{\prime}, \vec{A}\right) = \sum_w \lambda_w \left( \vec{p}_{e_w}^{\prime},\vec{A}_w\right),  \\ \text{where}~ \lambda_w > 0 ~\forall w, ~\sum_w \lambda_w = 1.
\end{align} Here $\vec{p}_e^{\prime}$ is a $\left(\sum_in_{m_i}n_{x_i}\right)$-tuple of all the encoding probabilities analogous to $\vec{p}_e$. 

Both the encoding polytopes $\mathbb{E}_D$ and $\mathbb{E}_A$ entail $\left(\sum_{i=1}^N n_{m_i}n_{x_i} + N\right)$ variables. In order to obtain the polytopes $\mathbb{E}_D$ and $\mathbb{E}_A$ we need an upper bound on $n_{m_i}$. Following the method in Appendix C of \cite{Tavakoli2022informationally}, it is sufficient to consider $n_{m_i}=2^{\left(n_{x_i}-1\right)}$. The proof given for the distinguishability constraint can be easily shown to be true for antidistinguishability as well. 

\begin{definition}[Decoding polytope]
The polytope $\mathbb{M}$ associated with variables $p_d\left(z|\vec{m},y\right)$ is defined by the set of inequalities and equalities as described in (\ref{de1}) and  (\ref{de2}).    
\end{definition}
 For each specific choice of an $N$-tuple message $\vec{m}^*$, the decoding probabilities $\{p_d(z|\vec{m}^*,y)\}$ satisfying \eqref{de1},\eqref{de2}
induces $n_z^{n_y}$ extremal points, each corresponding to a deterministic outcome strategy. Following the same decoding strategy for all the choices of message variable $\vec{m}$, ultimately generates $n_z^{n_y\prod_i n_{m_i}}$ extremal points that characterizes $\mathbb{M}$, where $\prod_in_{m_i}=n_{\vec{m}}$ is the total number of choices for $\vec{m}$. Any interior point $\vec{p}_d$ of $\mathbb{M}$ can be written as a convex mixture of its extremal points, labeled by $u$: \begin{align}
    \vec{p}_d = \sum_u \nu_u \vec{p}_{d_u} ~, \\ \text{where~} \nu_u > 0,~ \sum_u \nu_u = 1 .
\end{align} Here $p_{d}=(p(z=1|m=1,y=1),...,p(z=n_z|m=1,y=1),p(z=1|m=2,y=1),...,p(z=n_z|m=n_{\vec{m}},y=1),...,p(z=n_z|m=n_{\vec{m}},y=n_y))$ is a $(n_zn_yn_{\vec{m}})$-tuple consisting of all the decoding probabilities.
All the extremal points of $\mathbb{D}_C^+ \left(\text{~or~}\mathbb{A}_C^+ \right)$ can be generated by multipying each extremal point of $\mathbb{E}_D\left(\text{~or~}\mathbb{E}_A\right)$ with each of  $\mathbb{M}$ according to Eq. (\ref{pcl}). The polytope $\mathbb{D}_C^+ \left(\text{~or~}\mathbb{A}_C^+ \right)$ can be equivalently characterized by its constituent extremal points or by its facet inequalities. These two representations are equivalent to one another, as given the facet inequalities, one can solve the \textit{vertex enumeration problem}, conversely, if given the vertex description, one can solve the dual \textit{facet enumeration problem}. 
For our analysis it is useful to work in the facet representation. The general form of facet inequalities of the polytopes $\mathbb{D}_C^+$ and $ \mathbb{A}_C^+$ can be written as \begin{align}
    \sum_{\vec{x},y,z} c_{\vec{x},y,z}p\left(z|\vec{x},y\right) \leq  f_{D}(\{D_i\}),  \label{facetd} \\ \text{and}~ \sum_{\vec{x},y,z} c_{\vec{x},y,z}p\left(z|\vec{x},y\right) \leq  f_{A}(\{A_i\}).  \label{facetan}
\end{align} respectively. Here $c_{\vec{x},y,z} \in \mathbb{R}$ and the classical bounds $f_{D}(\{D_i\})$ and $f_{A}(\{A_i\})$ are linear functions of individual distinguishabilities and  antidistinguishabilities respectively.

 However, given a set of distinguishabilities $\{D_i\}_i$ (or antidistinguishabilities $\{A_i\}_i$) with the associated probability distribution $\{q_{x_i}\}$, one can consider a general form of the figure of merit of the communication task as a linear function of the probabilities,
\begin{equation} \label{genI}
 \sum_{\vec{x},y,z} c_{\vec{x},y,z}p\left(z|\vec{x},y\right),
\end{equation}
which may not be a facet inequality of the polytope $\mathbb{D}_C$ (or $\mathbb{A}_C$).

\begin{definition}
    The best classical value of the expression \eqref{genI}, denoted by $\mathcal{S_C}$, is defined as
    \begin{equation} \label{def-SC}
        \mathcal{S_C} = \max_{\substack{\{p_e(m_i|x_i)\} \\ \{p_d(z|\vec{m},y)\}}} \sum_{\vec{x},y,z} c_{\vec{x},y,z} \left( \sum_{\vec{m}} p_d\left(z|\vec{m},y\right)\prod_{i=1}^N p_e\left(m_i|x_i\right) \right),
    \end{equation}
    under the constraints on $\{p_e(m_i|x_i)\}$ given by ~\eqref{dis} (or ~\eqref{antii}).
\end{definition}
The value of $\mathcal{S_C}$ can be obtained by evaluating the expression \eqref{genI} over all the external points of the polytope $\mathbb{D}_C$ (or $\mathbb{A}_C$).

\begin{definition}
    The maximum value of the expression \eqref{genI} in an operational theory is denoted by $\mathcal{S_O}$. Here, the operational theory respects only the constraints of either distinguishability \eqref{op3} or antidistinguishability \eqref{anti2} and independence of senders \eqref{independence of senders}.
 \end{definition}
Note that $\mathcal{S_O}$ may be less than the algebraic maximum of the expression \eqref{genI}, which can be obtained by relaxing the constraint of independence \eqref{independence of senders}.
In the following section, we discuss quantum communication in a multipartite setting, constrained by distinguishability (or antidistinguishability) of the senders' input.

\subsection{Multipartite quantum communication} \label{sub quant adv}
In quantum communication, each input $x_i$ of the sender $S_i$ sends a quantum system described by a density matrix $\rho_{x_i}$ and the measurement at the receiver's end, corresponding to the measurement choice $y$, is described by a set of POVM elements $\{M_{z|y}\}$ such that $\sum_z M_{z|y} = \mathbb{I}, ~\forall y$. The dimension of $\rho_{x_i}$ can be different across different senders. The observed probability acquires the form $ p(z|\vec{x},y)=\Tr\left[\left(\bigotimes_{i=1}^N\rho_{x_i}\right) M_{z|y}\right]$. The distinguishability \eqref{op33} and antidistinguishability \eqref{anti22} of quantum states $\{\rho_{x_i}\}_{x_i=1}^{n_{x_i}}$ of the $i$th sender are constrained respectively as \begin{align}
   \max_R\sum_{o_i}\max_{x_i}q_{x_i}\Tr\left(\rho_{x_i}R_{o_i}\right) \leq D_i, \label{qdis}\\ \text{and}~ 1-\min_R  \sum_{o_i}\min_{x_i}q_{x_i}\Tr\left(\rho_{x_i}R_{o_i}\right) \leq A_i ,\label{qanti} 
\end{align}where the optimization is performed over all possible $n_{o_i}$-outcome measurement $R =\{R_{o_i}\}$. Here, 
$o_i$ denotes the outcome of the measurement $R$ for the sender $S_i$. Again, the optimal POVMs that saturate \eqref{qdis} or 
\eqref{qanti} need not be part of the receiver's choice of measurement. In terms of the operational probabilities we can write \eqref{qdis} and \eqref{qanti} as \begin{align}
    \max_{y}\sum_{z}\max_{x_i}q_{x_i}\Tr\left(\otimes_{j=1}^N\rho_{x_j}{M_{z|y}}\right) \leq D_i, \\ \text{and}~ 1-\min_y  \sum_{z}\min_{x_i}q_{x_i}\Tr\left(\otimes_{j=1}^N\rho_{x_j}M_{z|y}\right) \leq A_i ,
\end{align}since a global POVM does not enhance the (anti)distinguishability of local quantum states.

\begin{definition}[Quantum (anti)distinguishability set]
The set of correlations arising from the assumption of quantum theory as the underlying physical process in order to explain the operational probabilities of $\mathbb{D}(\text{or~~}\mathbb{A})$ manifests a convex set $\mathbb{D}_Q(\text{or~}\mathbb{A}_Q)$.     
\end{definition}
Generally $\mathbb{D}_C\subseteq \mathbb{D}_Q \subseteq \mathbb{D}\text{~and~} \mathbb{A}_C \subseteq \mathbb{A}_Q\subseteq \mathbb{A}$ as the quantum set always contains the classical set and the theory-independent set of correlations can still be larger than the quantum set. The strictness of these inclusions relies on the particular kind of communication scenario being studied. We will characterize the inclusions for a few scenarios in Sec. \ref{sec3}.

\begin{definition}
The best quantum value of a linear figure of merit \eqref{genI}, denoted by $\mathcal{S_Q}$, provided a set of distinguishabilities $\{D_i\}_i$ (or antidistinguishabilities $\{A_i\}_i$) with the associated probability distribution $\{q_{x_i}\}$, is given by 
\begin{equation} \label{def-SQ}
    \mathcal{S_Q} = \max_{\{\rho_{x_i}\},\{M_{z|y}\}}~\sum_{\vec{x},y,z}c_{\vec{x},y,z}\Tr\left[\left(\otimes_{i=1}^N\rho_{x_i}\right) M_{z|y}\right] ,
\end{equation}
subject to the constraint that the states $\{\rho_{x_i}\}_{x_i}$ satisfy \eqref{qdis} [or \eqref{qanti}].
\end{definition}


\subsubsection{Semidefinite optimization to obtain a lower bound on $\mathcal{S_Q}$} \label{SDP}

 In general, it is difficult to obtain $\mathcal{S_Q}$.  
In order to identify quantum advantage in a multipartite communication scenario, it is necessary to find a lower bound of $\mathcal{S_Q}$. The method of obtaining a lower bound of $\mathcal{S_Q}$ in a bipartite communication scenario equipped with distinguishability constraints on senders' inputs was introduced in Ref. \cite{Tavakoli2022informationally}. We introduce an iterative optimization algorithm to obtain lower bounds of $\mathcal{S_Q}$ for both distinguishability and antidistinguishability based on multipartite communication tasks. The task is to perform the maximization described in \eqref{def-SQ} following the distinguishability \eqref{qdis} and antidistinguishability constraints \eqref{qanti} in respective cases. Specifically, we introduce the method for a scenario comprising two senders with $n_{x_1}$ and $n_{x_2}$ possible inputs and one receiver with no external input, i.e., a fixed choice of $y$. However, it can be generalized to arbitrary senders analogously. We discuss the SDPs corresponding to distinguishability- and antidistinguishability- constrained scenarios separately. 

\textit{Distinguishability constrained communication task:} 
The goal is to maximize \begin{align}
    \sum_{x_{1},x_{2},z} c_{x_1,x_2,z} \Tr\left[(\rho_{x_{1}} \otimes \rho_{x_{2}}) M_{z}\right] , \label{dconsdp} 
\end{align} concerning quantum states and measurements that satisfy the distinguishability constraints in \eqref{qdis} for $i\in\{1,2\}$. 
    Introducing two auxilliary variables $\sigma_i,i\in \{1,2\}$ satisfying \begin{align}
        \sigma_i\geq q_{x_i}\rho_{x_i},\label{auxsigma}
    \end{align} relieves us of the quadratic constraints in \eqref{qdis}. We can bound the distinguishability as \begin{align}
        \max_{\{M_{z_i}\}}\sum_{z_i}\max_{x_i}q_{x_i}\Tr\left(\rho_{x_i}M_{z_i}\right) \leq \max_{\{M_{z_i}\}}\sum_{z_i}\Tr\left(\sigma_iM_{z_i}\right)=\Tr(\sigma_i),
    \end{align} where we got relieved of the maximization over $M_{z_i}$ using the normalization condition $\sum_{z_i}M_{z_i}=\mathbb{I}$. Now, the constraint $\Tr(\sigma_i)\leq D_i$ along with \eqref{auxsigma} are alternative linear constraints to \eqref{qdis}. We can now formally state the SDP with linear constraints as \begin{align} 
        \max_{\rho_{x_1},\rho_{x_2},\sigma_1,\sigma_2,M_z} \sum_{x_{1},x_{2},z} c_{x_1,x_2,z} \Tr\left[(\rho_{x_{1}} \otimes \rho_{x_{2}}) M_{z}\right] \nonumber\\ \text{such that ~} \rho_{x_i}\geq 0,\Tr(\rho_{x_i})=1,~\forall i\in\{1,2\}, \label{sdpd}\\ \sigma_i\geq q_{x_i}\rho_{x_i},\Tr(\sigma_i)\leq D_i;\nonumber \\ M_z\geq 0 ~\forall z, \sum_z M_z = \mathbb{I}.\nonumber
    \end{align}
\textit{Antidistinguishability-constrained communication task :}\label{anti-sdpp}
Here again, the goal is to obtain the maximum of \eqref{dconsdp} optimizing over quantum states and measurements that satisfy the antidistinguishability constraints of \eqref{qanti} for $i\in \{1,2\}$. Again, the constraint in \eqref{qanti} is quadratic and is difficult to solve via SDP. We introduce another pair of auxiliary variables $\omega_i,i\in \{1,2\}$ which satisfies \begin{align}
    q_{x_i}\rho_{x_i}\geq \omega_i.\label{auxomega}
\end{align} We can bound the antidistinguishability as \begin{align}
    1-\min_{\{M_{z_i}\}}\sum_{z_i}\min_{x_i}q_{x_i}\Tr\left(\rho_{x_i}M_{z_i}\right)&\leq 1-\min_{\{M_{z_i}\}}\sum_{z_i}\Tr\left(\omega_iM_{z_i}\right) \nonumber\\ &=1-\Tr\left(\omega_i\right).
\end{align} Now, the constraint $1-\Tr\left(\omega_i\right)\leq A_i$ along with \eqref{auxomega} represents linear antidistinguishability constraints. We formulate the SDP as \begin{align}
    \max_{\rho_{x_1},\rho_{x_2},\omega_1,\omega_2,M_z} \sum_{x_{1},x_{2},z} c_{x_1,x_2,z} \Tr\left[(\rho_{x_{1}} \otimes \rho_{x_{2}}) M_{z}\right] \nonumber\\ \text{such that ~}~\rho_{x_i}\geq 0,\Tr(\rho_{x_i})=1,  ~\forall i\in\{1,2\}, \\ q_{x_i}\rho_{x_i}\geq\omega_i ,1-\Tr(\omega_i)\leq A_i;\nonumber \\ M_z\geq 0 ~\forall z, \sum_z M_z = \mathbb{I}.\nonumber
\end{align} 

Now we discuss how the iterative optimization algorithm (or 'SeeSaw') works to produce a lower bound on $\mathcal{S_Q}$. We discuss the algorithm only for the SDP corresponding to the distinguishability constraint. The SDP corresponding to antidistinguishability follows analogously. The algorithm samples random quantum states $\{\rho_{x_i}\}$. Then optimizes over quantum measurement $M_{z}$ respecting the relevant constraints on measurement in \eqref{sdpd}, to maximize the linear function of probabilities in \eqref{sdpd}, with fixed states $\rho_{x_i}$. In the next step of the iteration, it fixes $M_{z}$ obtained in the previous iteration and optimizes over $\rho_{x_1}$ while satisfying the relevant constraints in \eqref{sdpd}. In this iteration $\rho_{x_2}$ remains fixed. The next iteration consists of optimizing $\rho_{x_2}$ satisfying relevant constraints in \eqref{sdpd}, keeping $M_{z},\rho_{x_1}$ fixed as obtained in previous iterations. The steps are then repeated until the optimizer saturates at a specific value. We repeat this whole process many number of times and take the best value among them. This three-fold optimization algorithm optimizes from inside the quantum set, thus producing a lower bound to the quantum set.  

\begin{definition}
    The quantum value of the figure of merit \eqref{genI} obtained from the aforementioned "SeeSaw" method taking $d$-dimensional quantum states for each party (i.e., $\rho_{x_i}$ acts on $\mathbbm{C}^d$) is denoted by $\mathcal{S}_{\mathcal{Q}_d}$.
\end{definition}
In general, $\mathcal{S}_{\mathcal{Q}_d} \leqslant \mathcal{S_Q}$, for all $d$ and $\mathcal{S_C}\leqslant\mathcal{S_Q}\leqslant \mathcal{S_O}$.

\subsubsection{Quantification of quantum advantage} \label{q_advantage}

Quantum advantage over classical strategies can be quantified in several ways. In this work, we adopt two standard approaches.

First, when the values of distinguishabilities (or antidistinguishabilities) $\{D_i\}_i$ (or $\{A_i\}_i$) are specified, if the best quantum value of the figure of merit \eqref{genI} is greater than the best classical value, i.e., $\mathcal{S_Q}>\mathcal{S_C}$, that indicates an advantage. Accordingly, quantum advantage can be quantified by the ratio $\mathcal{S_Q}/\mathcal{S_C}$. Generally, $\mathcal{S_Q} \geqslant \mathcal{S_C}$ since the set of quantum correlations encompasses the classical set. 
Although obtaining the exact value of $\mathcal{S_Q}$ may be challenging, it suffices to demonstrate that $\mathcal{S}_{\mathcal{Q}_d}/\mathcal{S_C} > 1$, which serves as a lower bound for $\mathcal{S_Q}/\mathcal{S_C}$.

As an alternative approach, one may consider the minimum amount of distinguishability (or antidistinguishability) required to achieve a specific value of the figure of merit \eqref{genI}, denoted by $\mathcal{S}$. We formally define this quantity for both classical and quantum communication settings.

\begin{definition} The total amount of distinguishability (or antidistinguishability) required in a multipartite classical communication setting to achieve a given value $\mathcal{S}$ of the figure of merit \eqref{genI} is defined as: \begin{equation} D_C^{\text{total}} = \min_{\{D_i\}_i} \prod_{i=1}^N D_i \quad \left( A_C^{\text{total}} = \min_{\{A_i\}_i} \prod_{i=1}^N A_i \right), \end{equation} 
subject to the constraint $\mathcal{S_C} = \mathcal{S}$. \end{definition} 
An analogous expression is defined for quantum communication:

\begin{definition} The total amount of distinguishability (or antidistinguishability) required in a multipartite quantum communication setting to achieve a value $\mathcal{S}$ of the figure of merit \eqref{genI} is defined as: 
\begin{equation} 
D_Q^{\text{total}} = \min_{\{D_i\}_i} \prod_{i=1}^N D_i \quad \left( A_Q^{\text{total}} = \min_{\{A_i\}_i} \prod_{i=1}^N A_i \right), \end{equation} 
under the condition that $\mathcal{S_Q} = \mathcal{S}$. \end{definition}

For any given facet inequality \eqref{facetd} [or \eqref{facetan}], the classical quantities $D_C^{\text{total}}$ (or $A_C^{\text{total}}$) can be obtained straightforwardly by minimizing $\prod_i D_i$ (or $\prod_i A_i$) under the linear constraints $f_C({D_i}) = \mathcal{S}$ [or $f_C({A_i}) = \mathcal{S}$]. However, the quantum counterparts are more difficult to evaluate. Nevertheless, the semidefinite optimization method described in the previous subsection can be employed to obtain an upper bound on $D_Q^{\text{total}}$ (or $A_Q^{\text{total}}$).

In the specific case of two senders with distinguishability constraints and no input for the receiver, the following optimization using the "SeeSaw" method can be performed: \begin{align} 
\min \{\tr(\sigma_1) + \tr(\sigma_2)\} \nonumber \\ 
\text{such that } \sum_{x_{1},x_{2},z} c_{x_1,x_2,z} \Tr\left[(\rho_{x_{1}} \otimes \rho_{x_{2}}) M_{z}\right] = \mathcal{S}, \nonumber\\
~\forall i\in{1,2}, ~~\rho_{x_i}\geq 0,\Tr(\rho_{x_i})=1, \sigma_i\geq q_{x_i}\rho_{x_i};\nonumber \\
M_z\geq 0 ~\forall z, \sum_z M_z = \mathbb{I}. 
\end{align}

From the outcome of this optimization, one can deduce that the target value $\mathcal{S}$ can be achieved with $D_1 D_2 = \tr(\sigma_1)\tr(\sigma_2)$, which provides an upper bound on $D_Q^{\text{total}}$. We denote this value by $D_{Q_d}^{\text{total}}$ when $d$-dimensional quantum systems are used in the optimization. A similar approach can be employed to obtain an upper bound on $A_Q^{\text{total}}$.

The quantum advantage can be quantified by the ratio $D_C^{\text{total}}/D_Q^{\text{total}}$. When this ratio is strictly greater than 1, it implies that more information about the senders’ inputs must be communicated classically to achieve the same value of $\mathcal{S}$. Alternatively, computing $D_C^{\text{total}}/D_{Q_d}^{\text{total}}$ provides a lower bound on $D_C^{\text{total}}/D_Q^{\text{total}}$.

It is important to note that an advantage observed in one approach implies an advantage in the other, and vice versa. Specifically, the condition $\mathcal{S_Q}/\mathcal{S_C} > 1$ is equivalent to $D_C^{\text{total}}/D_Q^{\text{total}} > 1$. However, the precise values of these two ratios need not exhibit the same behavior. For example, given two different expressions for the figure of merit, one may yield a greater quantum advantage as measured by $\mathcal{S_Q}/\mathcal{S_C}$, while the other may exhibit a larger advantage according to $D_C^{\text{total}}/D_Q^{\text{total}}$.

\section{Explicit study of elementary scenarios}\label{sec3}
We focus on a two-sender and single receiver communication scenario where the first and second sender choose from $n_{x_1}, n_{x_2}$ possible inputs, respectively. The distinguishability (or antidistinguishability) of their inputs is upper bounded by $D_1$ and $D_2$ (or $A_1$ and $A_2$). We consider uniform probability distribution over inputs, i.e., $q_{x_i}=\frac{1}{n_{x_i}},\forall i\in\{1,2\}$. The receiver provides an outcome $z\in[n_z]$ depending on the received messages and a fixed input $y=1$. As the external input to the receiver is fixed, we drop the label $y$. We represent this communication scenario as $\left(n_{x_1},n_{x_2},n_z\right)$ scenario. In the following scenarios, we describe a complete characterization of the boundaries of a classical set $\mathbb{D}_C^+$ and $\mathbb{A}_C^+$ by enlisting their facet inequalities. The facet inequalities have been generated using \texttt{"polymake"} and \texttt{"julia"} software. We mention only the nontrivial facets and ignore the trivial ones of the form $0 \leqslant p\left(z|x_1,x_2\right)\leqslant 1;~ \frac{1}{n_{x_i}}\leqslant D_i\leqslant 1,1-\frac{1}{n_{x_i}}\leqslant A_i\leqslant 1 ,~\forall i\in\{1,2\}$. The term "orbit size" represents the number of inequalities that are equivalent under enlisted symmetry operations. We represent only one inequality from each equivalent class. Observe the generality of inequalities as it treats $D_i$s (or $A_i$s) as variables in respective scenarios and encompasses all distinguishability (or antidistinguishability) constrained scenarios for any valid specific values of these variables. We study lower bounds of the quantum values of these facet inequalities in the range $\frac{1}{n_{x_i}}\leqslant D_i\leqslant 1$ (or $1-\frac{1}{n_{x_i}}\leqslant A_i\leqslant 1$), using the algorithm developed in Sec. \ref{SDP}. We enlist quantum advantages in both methods as described in Sec. \ref{sub quant adv}. We study a total five scenarios constrained by either distinguishability or antidistinguishability of the senders' input. \\

 The first two scenarios $\left(2,2,2\right) $ and $\left(2,2,3\right)$ do not produce any quantum advantage. Note that the constraint of distinguishability and antidistinguishability appears to be same in these two cases as we are dealing with two inputs for both the senders. The $\left(3,2,2\right) $ scenario under the antidistinguishability constraint also does not produce any quantum advantage. We encourage the reader to look into at  Appendix \ref{no_adv} for detailed numerical foundings in these three scenarios. \\

The first instance of multipartite quantum advantage appears in $\left(3,2,2\right)$ scenario with distinguishability constraint. We provide explicit extremal communication strategies for both the senders and receiver, specifically for this scenario \cite{ankush238_preparation_measurement_vertices_2025}. We discuss one such strategy in-depth in Appendix \ref{app_3}.  We encourage the reader to see Appendix \ref{app_2} for an extensive discussion on the facet inequalities of $\mathbb{D}_C^+$ and corresponding wide range of quantum advantages. For instance, consider the following inequality from Table \ref{table3}:

\bea
\mathcal{I}_1 = p(2|1,1)-3p(2|2,1)+p(2|3,1)-p(2|1,2) \nonumber \\  +p(2|2,2)+p(2|3,2) \leqslant 6D_1+2D_2-3 .\nonumber
\eea
\begin{figure}[htbp]
    \centering    \includegraphics[scale=0.6]{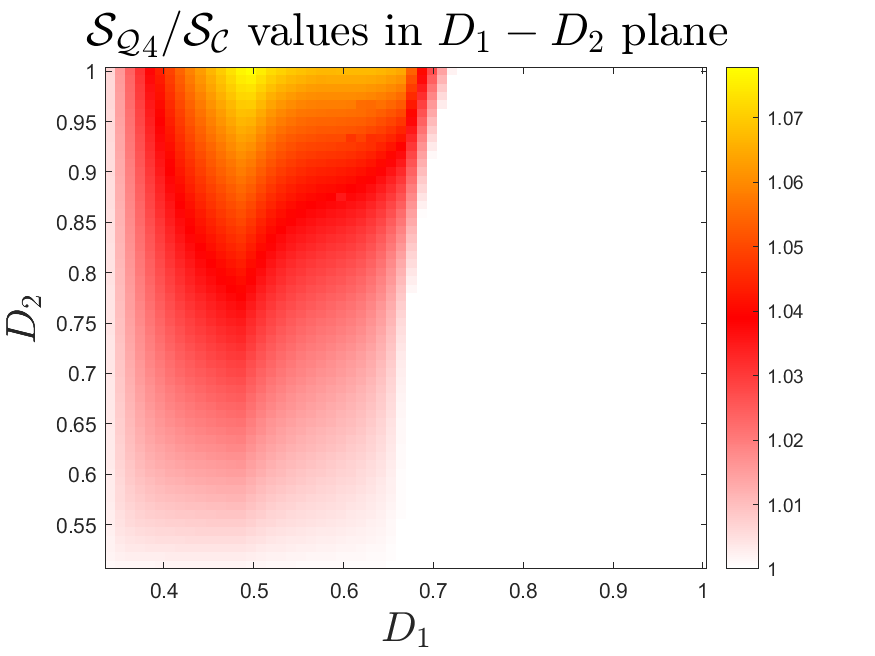}
    \caption{Landscape of $\mathcal{S}_{\mathcal{Q}_4}/\mathcal{S_C}$ over all valid choices of $D_1\in [1/3,1]$ and $D_2\in [1/2,1]$, corresponding to $\mathcal{I}_1$. }
    \label{qadvantage_I1}
\end{figure}
Figure \eqref{qadvantage_I1} depicts the obtainable quantum advantage in terms of the ratio $\mathcal{S_Q}_4/\mathcal{S_C}$ corresponding to $\mathcal{I}_1$. Following, we discuss an explicit qubit strategy of advantageous quantum communication for $\mathcal{I}_1$. Let the first sender transmit one among the three states $\rho_1=\ket{0}\bra{0},\rho_2=\ket{1}\bra{1}, \rho_3=\ket{-}\bra{-}$, 
and the second sender transmit either $\rho_1$ or $\rho_2$, while the receiver performs a binary outcome measurement \begin{align}
    M = \{ \mathbb{I}-\ket{\psi_1}\bra{\psi_1}-\ket{\psi_2}\bra{\psi_2}, ~~\ket{\psi_1}\bra{\psi_1}+\ket{\psi_2}\bra{\psi_2}\}.
\end{align}
where 
\begin{align}\ket{\psi_1}=&[0, -1/\sqrt{19}, 0, \sqrt{18}/\sqrt{19}]^T, \nonumber \\ \text{and}~ \ket{\psi_2}=&[ -\sqrt{67}/\sqrt{68}, 0,  1/\sqrt{68}, 0]^T.
\end{align} 
This strategy produces a value of $\mathcal{S_Q}=3.1796$ for $\mathcal{I}_1$, with $D_1=2/3 ,D_2=1 $. Using the same $D_1$ and $D_2$, we get $\mathcal{S_C}= 3$, which reveals an advantage of $\mathcal{S_Q}/\mathcal{S_C}=1.06.$ This is also shown in Fig. \eqref{qadvantage_I1}.



We also study the $(2,2,4)$ scenario under antidistinguishability constraints and report several pieces of evidence of quantum advantage, mentioned explicitly in Appendix \ref{app_2}. Table \ref{table4} will inscribe three of these, and we will now discuss one of them explicitly.
Consider the inequality ,
\bea
\mathcal{I}_6 =& p(1|1,2)-p(1|2,2)-p(2|1,1)+p(2|2,1)\nonumber \\
&+p(2|1,2)-p(2|2,2)+p(3|2,1)-p(3|2,2) \nonumber \\ &\leqslant  2A_1+2A_2-2 .~~~~~\nonumber
\eea
In the following, we provide an explicit example of advantageous quantum strategy for $\mathcal{I}_6$. 
Consider the following qubit strategy, where each sender transmits one of two states, $\rho_{1}=\ket{0}\!\bra{0},\rho_{2}= \ket{+}\!\bra{+}$, and the receiver performs a four-outcome measurement defined by
\begin{align} 
M=&\{\ket{\psi_1}\!\bra{\psi_1},\ket{\psi_2}\!\bra{\psi_2},\ket{\psi_3}\!\bra{\psi_3},\nonumber \\ 
&\mathbb{I}-\ket{\psi_1}\!\bra{\psi_1}-\ket{\psi_2}\!\bra{\psi_2}-\ket{\psi_3}\!\bra{\psi_3}\}, 
\end{align} 
where,
\begin{align*} \ket{\psi_1} &=[1/\sqrt{2},1/2,-1/2,0]^T, \\
\ket{\psi_2} &=[0, 1/\sqrt{3},1/\sqrt{3},-1/\sqrt{3}]^T,\\
\text{and}~ \ket{\psi_3} &=[1/\sqrt{2},-1/2,1/2,0]^T. \end{align*} 
This quantum strategy yields a value of $\mathcal{S_Q}=1.46$ for $\mathcal{I}_6$, with $A_1=A_2=0.85$. Substituting the values of $A_1,A_2$ into the classical bound of $\mathcal{I}_6$, we find an advantage of $\mathcal{S_Q}_2 / \mathcal{S_C} = 1.04$, which almost matches the best value obtained using the SeeSaw method, as will be  shown in Fig. \eqref{224subfigure 4}. On the other hand, Fig. \eqref{anti_advantage} depicts the obtainable quantum advantage in $\mathcal{I}_6$ in terms of the ratio $A_C^{\text{total}}/A_{Q_4}^{\text{total}}$, for different values of $\mathcal{S}$.

\begin{figure}[htbp]
    \centering    \includegraphics[scale=0.6]{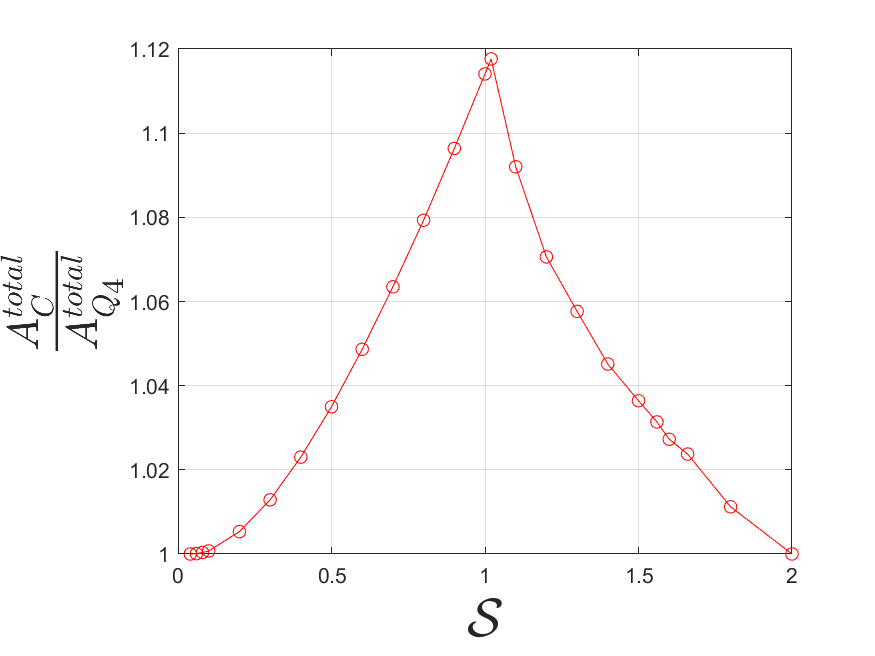}
    \caption{$A_C^{\text{total}}/A_{Q_4}^{\text{total}}$ vs $\mathcal{S}$ plot, corresponding to $\mathcal{I}_6$.}
    \label{anti_advantage}
\end{figure}

\blk

\section{The task of Antidistinguishing distributed inputs}\label{gameanti} 

Consider another multipartite communication task constrained by the antidistinguishability of the individual sender's input. Each sender $S_i$ receives an input $x_i\in[n_{x_i}]$ with probability $q_{x_i}$ and communicates a message (either classical or quantum) to the receiver depending on the input. The receiver performs a fixed measurement and obtains an outcome $z\in [\prod_i n_{x_i}]$. Let the antidistinguishability of the inputs of sender $S_i$ be bounded by $A_i$. Unlike the scenarios studied in Sec. \ref{sec3}, here we fix the success metric of the task by choosing it to be the ability of the receiver to antidistinguish the input tuples $\{\vec{x}=(x_1,...,x_N)\}$. We refer to this task as the antidistinguishability of distributed inputs defined by the success metric,
\begin{align}\label{distributed-anti}
\mathcal{\widetilde{S}} = 1-\sum_{z} q_{\vec{x}}p(z=\vec{x}|\vec{x}).    
\end{align} Here $\vec{x}$ is sampled from a probability distribution $\{q_{\vec{x}}\}_{\vec{x}}$. As each sender receives input independently from a distribution, $q_{\vec{x}}$ is essentially a product of those individual probabilities, i.e., $q_{\vec{x}} = \prod_{i=1}^Nq_{x_i}$. Note that, in this case as well, the receiver has no input. 

The success metric $\mathcal{\widetilde{S}_C}$ in classical communication can be upper bounded by a function of individual sender's antidistinguishabilities. We derive the bound in the following theorem.

\begin{thm}[Classical Antidistinguishability of distributed inputs]\label{th1}
In a classical multipartite communication scenario constrained by antidistinguishability values $\{A_i\}_i$ of the individual sender's input, the optimal success metric $(\mathcal{\widetilde{S}_C})$ of antidistinguishing distributed inputs $\vec{x}$ is upper bounded as 
    \be \label{thm1}
\mathcal{\widetilde{S}_C} \leqslant 1 - \prod_{i=1}^N (1-A_i) .
    \ee    
\end{thm}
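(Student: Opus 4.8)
The plan is to reduce the classical optimization to a per-message computation and then exploit the product structure forced by sender independence. First I would invoke the classical decomposition \eqref{pcl}, $p(z|\vec{x}) = \sum_{\vec{m}} p_d(z|\vec{m}) \prod_{i=1}^N p_e(m_i|x_i)$, and note that the success metric \eqref{distributed-anti} is affine in the decoding distribution $\{p_d(z|\vec{m})\}$; hence, for a fixed encoding, its optimum is attained at a vertex of the decoding polytope $\mathbb{M}$, i.e. at a deterministic decoding $z=f(\vec{m})$ with $f:\vec{m}\mapsto\vec{x}$ (recall $n_z=\prod_i n_{x_i}$). It therefore suffices to lower-bound the failure probability $1-\widetilde{\mathcal{S}} = \sum_{\vec{x}} q_{\vec{x}}\, p(\vec{x}|\vec{x})$ over all deterministic decodings and all encodings obeying the classical anti-distinguishability constraints \eqref{antii}.

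Next I would rewrite the failure probability by interchanging the sums over $\vec{x}$ and $\vec{m}$: for a deterministic $f$ one has $1-\widetilde{\mathcal{S}} = \sum_{\vec{m}} q_{f(\vec{m})}\prod_{i=1}^N p_e\!\big(m_i\,|\,f(\vec{m})_i\big)$, and using $q_{\vec{x}}=\prod_i q_{x_i}$ this equals $\sum_{\vec{m}} \prod_{i=1}^N q_{y_i}\, p_e(m_i|y_i)$ with $\vec{y}=f(\vec{m})$. Since every factor $q_{y_i} p_e(m_i|y_i)$ is nonnegative and depends only on the single coordinate $y_i$, the receiver's best (minimizing) choice of $f(\vec{m})$ factorizes coordinatewise, so $\min_f \sum_{\vec{m}} \prod_i q_{y_i}p_e(m_i|y_i) = \sum_{\vec{m}} \prod_{i=1}^N \min_{x_i} q_{x_i} p_e(m_i|x_i)$.

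Finally I would distribute the sum over $\vec{m}=(m_1,\dots,m_N)$ through the product to obtain $\;1-\widetilde{\mathcal{S}_C} \;=\; \prod_{i=1}^N\Big(\sum_{m_i}\min_{x_i} q_{x_i} p_e(m_i|x_i)\Big)$, and then apply the constraint \eqref{antii}, which says precisely that each factor is at least $1-A_i$; since all factors are nonnegative, $1-\widetilde{\mathcal{S}_C}\geq \prod_{i=1}^N(1-A_i)$, which rearranges to the claimed bound \eqref{thm1}. (Here $\widetilde{\mathcal{S}_C}$ is the maximum over both encodings satisfying \eqref{antii} and decodings, so this is genuinely an upper bound.)

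The two steps that need care — rather than any heavy computation — are (i) the reduction to a deterministic decoding, which is routine because the figure of merit is linear in $p_d$ and the vertices of $\mathbb{M}$ are exactly the deterministic strategies; and (ii) the identity ``minimum of a product of nonnegative, coordinatewise functions equals the product of the minima,'' which is where the independence of senders \eqref{independence of senders} together with the product form $q_{\vec{x}}=\prod_i q_{x_i}$ is essential — with correlated inputs or correlated senders the factorization, and hence the clean bound, would fail. I do not anticipate any other obstacle.
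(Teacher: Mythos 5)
Your proof is correct and follows essentially the same route as the paper's: reduce to a deterministic decoding, factorize the minimization over $\vec{x}$ coordinatewise, interchange the sum over $\vec{m}$ with the product over senders, and apply the per-sender constraint \eqref{antii} to each factor. Your justification of the factorization step (each nonnegative factor depends on a single coordinate of $\vec{x}$) is in fact stated more carefully than the paper's, which invokes a general ``min of products equals product of mins'' identity that is only valid in precisely this coordinatewise situation.
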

\begin{proof}
In classical communication, the success metric \eqref{def-SC} of antidistinguishability of distributed inputs obtains the form \begin{align}\label{classical distributed}
    \mathcal{\widetilde{S}_C} = 1-\min_{\substack{\{p_e(m_i|x_i)\} \\ \{p_d(z|\vec{m},y)\}}}  \sum_{z}q_{\vec{x}}\sum_{\vec{m}} p_d\left(z=\vec{x}|\vec{m}\right)\prod_{j=1}^N p_e\left(m_j|x_j\right) ,
\end{align}
under the constraints on $\{p_e(m_i|x_i)\}$ given by \eqref{antii}.
Note that for each choice of $\vec{m}$ the decoding probabilities $\{p_d(z|\vec{m})\}$ satisfying \eqref{de1} and \eqref{de2} form a convex polytope whose extremal points are characterized by deterministic decoding strategies, i.e., $p_d\left(z|\vec{m}\right) =0~\forall z$ except for a specific outcome $z^*$ such that $p_d\left(z^*|\vec{m}\right) =1$. Corresponding to each $\vec{m}$ opting for a deterministic decoding strategy, the optimization over $\vec{x}$ can be realized by minimizing $q_{\vec{x}}\prod_{j=1}^Np_e\left(m_j|x_j\right)$ over the index $\vec{x}$. This leads to a simpler reformulation of \eqref{classical distributed} as \begin{equation}  \label{stc-sim}
\mathcal{\widetilde{S}_C} =1-\min_{\{p_e(m_i|x_i)\}} \sum_{\vec{m}}\min_{\vec{x}}\left\{\prod_{j=1}^N q_{x_j} p_e\left(m_j|x_j\right) \right\} ,
\end{equation}
where $q_{\vec{x}}=\prod_{j=1}^Nq_{x_j}$ is used. It can be easily seen that for any set of non-negative numbers $\{\alpha_{i,j}\}_{i,j},$ the following identity holds,
\begin{equation}
    \min_i \left\{ \prod_{j} \alpha_{i,j} \right\} = \prod_j \min_i \left\{ \alpha_{i,j} \right\}.
\end{equation}
By identifying the variables $\alpha_{i,j}$ with $q_{x_j}p_e(m_j|x_j)$ in \eqref{stc-sim}, and applying the above relation, we obtain
\begin{align}
\mathcal{\widetilde{S}_C} &=1- \min_{\{p_e(m_i|x_i)\}} \sum_{\vec{m}}\prod_{j=1}^N\min_{x_j} \left\{ q_{x_j}p_e\left(m_j|x_j\right) \right\} \nonumber\\ &= 1-\min_{\{p_e(m_i|x_i)\}} \prod_{j=1}^N \sum_{m_j} \min_{x_j} \left\{q_{x_j} p_e\left(m_j|x_j\right) \right\} \nonumber \\ &\leq 1-\prod_{i=1}^N\left(1-A_i\right) .
    \end{align} 
Here the second line is only a rearrangement of the previous line. The third line follows from the definition of $A_i$ in \eqref{antii}. This completes the proof.
\end{proof}

We further constrain this task of antidistinguishing distributed inputs by considering $ A_i$ to be the same for all the senders. Now, consider $A^C$ to be the required antidistinguishability of each sender's inputs in classical communication to achieve a specific value of success metric $\mathcal{\widetilde{S}} = \mathcal{\widetilde{S}_C}$. Analogously, let $A^Q$ be the required antidistinguishability of individual sender's inputs in quantum communication to achieve the same value $\mathcal{\widetilde{S}} = \mathcal{\widetilde{S}_Q}$. We consider the ratio $(A^C/A^Q)^N$ as the quantifier of quantum advantage for this task. 
 
In the following theorem, we prove the existence of a multipartite quantum communication protocol that is advantageous over classical communication. For simplicity, we consider equiprobable inputs for all the senders, i.e., $q_{x_i}=1/n_{x_i}~ \forall i\in [N]$.

\begin{thm}[Quantum advantage in antidistinguishing distributed inputs inspired by PBR Theorem \cite{Pusey_2012}]\label{pbr measurement} 
There exists a quantum multipartite communication protocol constrained by antidistinguishability, having exponential advantage with the number of senders $(N)$ when $\mathcal{\widetilde{S}}=1$,
\begin{align} \label{exp-adv}
    \left(\frac{A^C}{A^Q}\right)^N =  \left(\frac{2}{1+\sin\theta}\right)^N>1
\end{align} where $\theta$ satisfies $2\tan^{-1}(2^{(1/N)}-1)\leq \theta <\pi/2$.
\end{thm}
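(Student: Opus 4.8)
The plan is to exhibit an explicit quantum protocol inspired by the PBR construction and then compare the anti-distinguishability it requires against the classical bound from Theorem~\ref{th1}. I would take each sender to have a binary input $x_i \in \{1,2\}$ with $q_{x_i} = 1/2$, and have sender $S_i$ encode its input into one of two qubit states $\ket{\phi_1}$, $\ket{\phi_2}$ lying in a fixed plane of the Bloch sphere, separated by angle $\theta$; concretely $\ket{\phi_1} = \cos(\theta/2)\ket{0} + \sin(\theta/2)\ket{1}$ and $\ket{\phi_2} = \cos(\theta/2)\ket{0} - \sin(\theta/2)\ket{1}$, so that $|\la \phi_1 | \phi_2 \ra| = \cos\theta$. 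The receiver gets the $N$-qubit product state $\bigotimes_i \ket{\phi_{x_i}}$ and performs the PBR entangled projective measurement whose $2^N$ outcomes are each orthogonal to exactly one of the $2^N$ product states $\bigotimes_i \ket{\phi_{x_i}}$ (the key PBR fact is that such a measurement exists precisely when $\theta$ is large enough, i.e.\ when the overlap $\cos\theta$ is small enough relative to $N$). Labelling the outcome orthogonal to $\bigotimes_i\ket{\phi_{\vec x}}$ by $z = \vec x$, the receiver never outputs $z=\vec x$ on input $\vec x$, so $\sum_z q_{\vec x} p(z=\vec x|\vec x) = 0$ and hence $\widetilde{\mathcal S}_Q = 1$.

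Next I would compute the anti-distinguishability $A^Q$ actually required of each sender. For a single sender with the two states $\ket{\phi_1},\ket{\phi_2}$ and uniform prior, the anti-distinguishability is $1 - \min_M \sum_{z_i}\min_{x_i} \tfrac12 \Tr(\ket{\phi_{x_i}}\!\bra{\phi_{x_i}} M_{z_i})$; a standard two-state calculation (optimal $M$ being the Helstrom-type measurement, or just a direct optimization over binary POVMs in the relevant plane) gives $A^Q = \tfrac{1}{2}(1 + \sin\theta)$, where $\sin\theta = \sqrt{1 - \cos^2\theta}$ controls how far apart the states are. I would then invoke Theorem~\ref{th1}: to achieve $\widetilde{\mathcal S}_C = 1$ classically with a common anti-distinguishability $A^C$ for all senders one needs $1 - (1-A^C)^N \ge 1$, i.e.\ $(1-A^C)^N = 0$, forcing $A^C = 1$. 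Therefore the advantage ratio is $(A^C/A^Q)^N = (2/(1+\sin\theta))^N$, which is the claimed expression.

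Finally I would pin down the allowed range of $\theta$. Two conditions must hold: (i) the PBR measurement must exist, which requires the pairwise overlap to be small enough — the PBR threshold for $N$ systems is $\cos\theta \le 2^{1/N} - 1$ on the relevant overlap parameter, equivalently $\theta \ge 2\tan^{-1}(2^{1/N}-1)$ after translating the overlap bound into the half-angle; and (ii) for the protocol to be meaningful and for the ratio to exceed $1$ we need $A^Q < 1 = A^C$, i.e.\ $\sin\theta < 1$, equivalently $\theta < \pi/2$. Combining gives $2\tan^{-1}(2^{1/N}-1) \le \theta < \pi/2$, and on this interval $\sin\theta < 1$ so $(2/(1+\sin\theta))^N > 1$, completing the proof.

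The main obstacle I anticipate is getting the PBR existence threshold and the overlap-to-half-angle bookkeeping exactly right: the PBR theorem is usually stated in terms of the overlap of the two candidate preparations, and one must verify both that the stated interval $2\tan^{-1}(2^{1/N}-1)\le\theta$ is precisely the condition under which the $2^N$ product states admit a joint anti-distinguishing (PBR) measurement, and that the lower endpoint is consistent (nonempty, $<\pi/2$) for all $N\ge 2$. The single-sender computation of $A^Q = \tfrac12(1+\sin\theta)$ and the classical side via Theorem~\ref{th1} are routine by comparison.
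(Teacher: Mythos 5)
Your proposal is correct and follows essentially the same route as the paper: the same pair of qubit states with overlap $\cos\theta$, the PBR anti-distinguishing measurement to get $\mathcal{\widetilde{S}}_Q=1$, the Helstrom formula giving $A^Q=\tfrac{1}{2}(1+\sin\theta)$, and Theorem~\ref{th1} forcing $A^C=1$. The only slip is your intermediate restatement of the PBR threshold as $\cos\theta\le 2^{1/N}-1$; the correct form is $\tan(\theta/2)\ge 2^{1/N}-1$, which does yield the final condition $2\tan^{-1}(2^{1/N}-1)\le\theta<\pi/2$ that you (and the paper) state.
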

\begin{proof}
Consider each sender receives two equiprobable inputs $x_i\in\{0,1\}~\forall i\in [N]$ and communicates a quantum state $\ket{\psi_{x_i}}=\cos(\theta/2)\ket{0}+(-1)^{x_i}\sin(\theta/2)\ket{1}$ to the receiver, where $x_i\in \{0,1\}$. The receiver gets a product state $\ket{\psi_{\vec{x}}}=\ket{\psi_{x_1}}\otimes\cdots\otimes\ket{\psi_{x_N}}$. The task of the receiver is to antidistinguish these $2^N$ possible product states $\{\ket{\psi_{\vec{x}}}\}_{\vec{x}}$. Reference \cite{Pusey_2012} proved the existence of a $2^N$-outcome joint measurement on $N$ qubit state such that each outcome has zero probability of occurrence for one of the inputs when $\theta$ satisfies \begin{align}\label{theta_bound}
    2\tan^{-1}(2^{\frac{1}{N}}-1)\leq \theta <\pi/2 .\end{align} Essentially, this measurement strategy successfully antidistinguishes these product states, i.e., $\min_{\vec{x}}\Tr(\rho_{\vec{x}}M_{z})=0 ~\forall z$, which implies $\mathcal{\widetilde{S}_Q} =1$. The required antidistinguishability of individual sender's input to obtain $\mathcal{\widetilde{S}_Q} =1$, can be calculated using the Hellstrom formula \cite{Helstrom1969}. Antidistinguishability of the two input states $\ket{\psi_0}$ and $\ket{\psi_1}$ is, \begin{align}
        A^Q = & \frac{1}{2} \left( 1+\sqrt{1-|\langle\psi_0|\psi_1\rangle|^2}\right)\nonumber\\
  =&\frac{1}{2} \left(1+\sin\theta\right).
    \label{sin theta}\end{align} To achieve the same value of antidistinguishability of distributed inputs in classical communication, i.e., for $\mathcal{\widetilde{S}_C}=\mathcal{\widetilde{S}_Q}=1$, $A^C$ should be $1$, following from \eqref{thm1}. All the senders opt for the same communication strategy of sending a classical bit $i$ when its input is $i$, where $i\in \{0,1\}$. These two classical preparations are perfectly antidistinguishable, i.e., $A^C=1$. Consequently, we have \eqref{exp-adv} for any choice of $\theta$ from the range defined in \eqref{theta_bound}. This completes the proof.
\end{proof}
Although Theorem \ref{pbr measurement} guarantees the existence of a quantum protocol that is advantageous over classical communication, the amount of quantum advantage may vary drastically depending on the choice of $\theta$. Choosing $\theta \approx \pi/2$ from below, i.e., $\ket{\psi_{x_i}}\approx\{\ket{+},\ket{-}\}$, leads to $(A^C/A^Q)^N\approx 1$, even if the number of senders $N$ is large. So for this choice of $\theta$, despite of using a large number of senders, the quantum communication protocol fails to produce a large advantage over classical communication, i.e., essentially the exponential advantage is lost. 
This signifies the choice of $\theta$ is crucial to obtain a large quantum advantage despite using less senders.

This motivates the following corollary.
\begin{coro} The optimum quantum advantage to antidistinguish distributed inputs
under the restriction that the senders and receivers follow the quantum strategy based on
the PBR  \cite{Pusey_2012} construction, is achieved when $\theta = 2\tan^{-1}(2^{\frac{1}{N}}-1)$ for finite $N$ and the quantum advantage turns out to be 
\begin{align} \label{prb-best-ratio}
    \left(\frac{A^C}{A^Q}\right)^N =  2^N\left(1+2^{1-\frac{2}{N}}-2^{1-\frac{1}{N}}\right)^N .
\end{align} 
In the case when $N\rightarrow\infty$, this leads to 
\begin{align}
    \left(\frac{A^C}{A^Q}\right)^N\rightarrow 2^N .
\end{align}
\end{coro}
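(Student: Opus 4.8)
The plan is to treat the corollary as a two-part statement: first identify the $\theta$ that minimizes $A^Q$ over the allowed range $[2\tan^{-1}(2^{1/N}-1),\pi/2)$, and then substitute back to obtain the explicit ratio, followed by a routine limit computation as $N\to\infty$.

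For the first part, recall from Theorem \ref{pbr measurement} that $A^Q = \tfrac{1}{2}(1+\sin\theta)$ while $A^C = 1$ independently of $\theta$. Hence $(A^C/A^Q)^N = \bigl(2/(1+\sin\theta)\bigr)^N$ is a strictly decreasing function of $\theta$ on $[0,\pi/2)$, so maximizing the quantum advantage amounts to taking $\theta$ as small as the PBR constraint \eqref{theta_bound} permits, namely $\theta^* = 2\tan^{-1}(2^{1/N}-1)$. This is the conceptual heart of the argument, though it is essentially immediate once monotonicity is noted; the only subtlety is confirming that the endpoint $\theta^*$ is actually attainable (it is, since the constraint is a closed inequality on that side).

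For the second part, I would set $t = 2^{1/N}-1$ so that $\theta^* = 2\tan^{-1} t$, and use the half-angle identity $\sin(2\tan^{-1} t) = \tfrac{2t}{1+t^2}$. Then
\begin{align}
1+\sin\theta^* = 1 + \frac{2t}{1+t^2} = \frac{(1+t)^2}{1+t^2},
\end{align}
and since $1+t = 2^{1/N}$ we get $(1+t)^2 = 2^{2/N}$ and $1+t^2 = 1 + (2^{1/N}-1)^2 = 2 + 2^{2/N} - 2\cdot 2^{1/N}$. Therefore
\begin{align}
\frac{2}{1+\sin\theta^*} = \frac{2(1+t^2)}{2^{2/N}} = 2^{1-2/N}\bigl(2 + 2^{2/N} - 2\cdot 2^{1/N}\bigr) = 2\bigl(1 + 2^{1-2/N} - 2^{1-1/N}\bigr),
\end{align}
and raising to the $N$-th power yields \eqref{prb-best-ratio}.

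Finally, for the $N\to\infty$ limit, write $2^{1/N} = e^{(\ln 2)/N} = 1 + \tfrac{\ln 2}{N} + O(1/N^2)$, so $t = 2^{1/N}-1 = \tfrac{\ln 2}{N} + O(1/N^2)$ and $t^2 = O(1/N^2)$. Then $1 + 2^{1-2/N} - 2^{1-1/N} = 1 + t^2 \cdot 2^{1-2/N}/(\text{stuff})$—more directly, $2(1+2^{1-2/N}-2^{1-1/N}) = 2(1+t^2) = 2 + O(1/N^2)$, so $\bigl(2 + O(1/N^2)\bigr)^N = 2^N\bigl(1 + O(1/N^2)\bigr)^N \to 2^N$ in the sense that the ratio to $2^N$ tends to $1$. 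The main obstacle, such as it is, is purely bookkeeping: keeping the half-angle substitution and the powers of $2^{1/N}$ straight so that the final closed form matches \eqref{prb-best-ratio} exactly; there is no genuine difficulty, since monotonicity in $\theta$ does all the optimization work.
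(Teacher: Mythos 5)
Your optimization step and the derivation of \eqref{prb-best-ratio} match the paper's proof exactly: monotonicity of $\sin\theta$ on the allowed range forces $\theta^*=2\tan^{-1}(2^{1/N}-1)$, and the half-angle substitution $t=2^{1/N}-1$, $\sin\theta^*=2t/(1+t^2)$, $(1+t)^2=2^{2/N}$ reproduces the closed form correctly. That part is fine.

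The limit computation, however, contains a concrete error. You write $2\bigl(1+2^{1-2/N}-2^{1-1/N}\bigr)=2(1+t^2)$, but your own preceding line shows the correct identity is $2\bigl(1+2^{1-2/N}-2^{1-1/N}\bigr)=2(1+t^2)/2^{2/N}$; you have dropped the factor $2^{-2/N}$. This matters: $2^{-2/N}=1-\tfrac{2\ln 2}{N}+O(1/N^2)$, so the base is $2-\tfrac{4\ln 2}{N}+O(1/N^2)$, not $2+O(1/N^2)$, and raising to the $N$-th power gives
\begin{align}
2^N\Bigl(1-\tfrac{2\ln 2}{N}+O(1/N^2)\Bigr)^N \longrightarrow 2^N\cdot e^{-2\ln 2}=\tfrac{2^N}{4}.
\end{align}
So your claim that the ratio of \eqref{prb-best-ratio} to $2^N$ tends to $1$ is false; it tends to $1/4$. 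The statement ``$(A^C/A^Q)^N\to 2^N$'' should be read in the weaker sense that the per-sender base $2/(1+\sin\theta^*)\to 2$, i.e., $\bigl((A^C/A^Q)^N\bigr)^{1/N}\to 2$, which is all the asymptotic claim supports. (The paper's own proof simply substitutes and stops, so it does not make the erroneous ``ratio $\to 1$'' assertion; your strengthening of the limit is where the slip enters.)
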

\begin{proof}
    As $\sin\theta$ is a monotonically increasing function in the range of $\theta$ defined in \eqref{theta_bound}, the optimal choice of $\theta$ to maximize $2/(1+\sin\theta)$ corresponds to the minimum allowed value of $\theta$. This leads to the choice of optimal initial $N$-qubit product state for the quantum circuit used in \cite{Pusey_2012}. The minimum value of $\sin\theta$ is achieved when $\tan(\theta/2) =2^{\frac{1}{N}}-1$. Using trigonometric relations we get \begin{align}\sin\theta &= \frac{2\left(2^{\frac{1}{N}}-1\right)}{1+\left(2^{\frac{1}{N}}-1\right)^2} .
    \end{align} 
    After substituting this expression in the right-hand side of \eqref{exp-adv}, the quantum advantage turns out to be \eqref{prb-best-ratio}. This completes the proof.
\end{proof}

Until now, our discussion has been limited to scenarios where each sender has access to binary input choices. We now consider a general scenario further and allow an arbitrary $n$ input choices for each sender. Naturally, the question arises whether there exists any quantum strategy that is advantageous over classical communication for this task. In the following theorem, we provide an affirmative answer to this question. 
\begin{thm}[Sufficient condition for quantum advantage] Consider each sender gets $n$ possible inputs, and communicates the same set of states $\{\ket{\psi_{x}}\}_x$ to the receiver. A quantum advantage with $\mathcal{\widetilde{S}_Q}=1$ arises whenever there exists a set of states $\{\ket{\psi_{x}}\}_x$ whose inner products satisfy the following relations:
\begin{align}\label{suff-cond-1}
    n(n-2)< \sum_{j,l;j\neq l}|\langle\psi_j|\psi_l\rangle|, ~~~j,l\in[n]
\end{align} and 
\begin{align} \label{suff-cond-2}
    \sum_{j,l;j\neq l}|\langle\psi_j|\psi_l\rangle|^2\leq \frac{n^2}{2^{1/N}}-n  .
\end{align} 
\end{thm}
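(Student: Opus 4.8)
The plan is to realize the lower bound $\widetilde{\mathcal{S}}_Q=1$ through the protocol in which every sender uses the common ensemble $\{\ket{\psi_x}\}_x$, so the receiver holds one of the $K:=n^N$ product states $\ket{\psi_{\vec{x}}}=\bigotimes_{i=1}^N\ket{\psi_{x_i}}$ and must anti-distinguish the string $\vec{x}$. Exactly as in the proof of Theorem~\ref{pbr measurement}, $\widetilde{\mathcal{S}}_Q=1$ holds iff the family $\{\ket{\psi_{\vec{x}}}\}_{\vec{x}}$ is anti-distinguishable, i.e.\ there is a POVM $\{M_{\vec{x}}\}$ with $\la\psi_{\vec{x}}|M_{\vec{x}}|\psi_{\vec{x}}\ra=0$ for all $\vec{x}$. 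The quantum side thus reduces to proving anti-distinguishability of the product ensemble, and the existence of a genuine advantage reduces to showing the single-site ensemble $\{\ket{\psi_x}\}_x$ is itself \emph{not} anti-distinguishable (so that $A^Q<1$), since classically $A^C=1$ is forced by Theorem~\ref{th1}: $\widetilde{\mathcal{S}}_C=1$ requires $\prod_i(1-A_i)=0$, hence $A^C=1$ when the bound is common to all senders.

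For anti-distinguishability of the product ensemble I would use a Gram-matrix sufficient condition: a set of $K$ unit vectors $\{\ket{\phi_k}\}$ with Gram matrix $\mathcal{G}$, $\mathcal{G}_{kl}=\la\phi_k|\phi_l\ra$, is anti-distinguishable whenever $\|\mathcal{G}\|_F^2=\sum_{k,l}|\la\phi_k|\phi_l\ra|^2\le K^2/2$ (equivalently, the uniform mixture $\tfrac1K\sum_k\ket{\phi_k}\!\bra{\phi_k}$ has purity at most $1/2$). This is the step I expect to be the main obstacle --- either recalling it from the quantum-state-exclusion literature or establishing it directly, for instance via the SDP-duality fact that anti-distinguishability is equivalent to $\max\{\Tr Y: Y=Y^\dagger,\ Y\preceq\ket{\phi_k}\!\bra{\phi_k}\ \forall k\}\le 0$, combined with a trace estimate on feasible $Y$ that exploits $\|\mathcal{G}\|_F^2\le K^2/2$. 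Granting it, the tensor structure finishes the job: $\mathcal{G}=G^{\otimes N}$ with $G$ the $n\times n$ Gram matrix of $\{\ket{\psi_x}\}_x$, so $\|\mathcal{G}\|_F^2=(\|G\|_F^2)^N$ and $K^2=n^{2N}$; the sufficient condition becomes $(\|G\|_F^2)^N\le n^{2N}/2$, i.e.\ $\|G\|_F^2\le n^2/2^{1/N}$, i.e.\ $n+\sum_{j\neq l}|\la\psi_j|\psi_l\ra|^2\le n^2/2^{1/N}$ --- precisely \eqref{suff-cond-2}. Hence \eqref{suff-cond-2} guarantees the $n^N$ product states are anti-distinguishable and $\widetilde{\mathcal{S}}_Q=1$.

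It remains to show \eqref{suff-cond-1} forces $A^Q<1$, i.e.\ that $\{\ket{\psi_x}\}_x$ is not anti-distinguishable. For this I would prove the matching necessary condition: any anti-distinguishable $n$-state ensemble obeys $\sum_{j\neq l}|\la\psi_j|\psi_l\ra|\le n(n-2)$. Indeed, from a witnessing POVM $\{E_x\}$ with $\sum_x E_x=\mathbb{I}$ and $E_x\ket{\psi_x}=0$, one has $\la\psi_j|\psi_l\ra=\sum_{x\neq j,l}\la\psi_j|E_x|\psi_l\ra$ (the $x=j$ and $x=l$ terms vanish), hence by Cauchy--Schwarz $|\la\psi_j|\psi_l\ra|\le\sum_{x\neq j,l}\|E_x^{1/2}\ket{\psi_j}\|\,\|E_x^{1/2}\ket{\psi_l}\|$; summing over ordered pairs $j\neq l$, swapping the order of summation so that the right side becomes $\sum_x\big[\big(\sum_{j\neq x}\|E_x^{1/2}\ket{\psi_j}\|\big)^2-\sum_{j\neq x}\|E_x^{1/2}\ket{\psi_j}\|^2\big]$, and applying Cauchy--Schwarz again as $\big(\sum_{j\neq x}\|E_x^{1/2}\ket{\psi_j}\|\big)^2\le (n-1)\sum_j\la\psi_j|E_x|\psi_j\ra=(n-1)\Tr[E_xF]$ with $F=\sum_j\ket{\psi_j}\!\bra{\psi_j}$, together with $\sum_x\Tr[E_xF]=\Tr F=n$, gives $\sum_{j\neq l}|\la\psi_j|\psi_l\ra|\le(n-2)\Tr F=n(n-2)$. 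Condition \eqref{suff-cond-1} strictly violates this, so $\{\ket{\psi_x}\}_x$ is not anti-distinguishable and $A^Q<1$. Combining with $A^C=1$ yields $\big(A^C/A^Q\big)^N=(A^Q)^{-N}>1$, the claimed quantum advantage at $\widetilde{\mathcal{S}}_Q=1$.
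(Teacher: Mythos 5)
Your proposal is correct and follows the same overall architecture as the paper's proof: $A^C=1$ is forced by Theorem \ref{th1}; condition \eqref{suff-cond-1} is used to guarantee that the single-site ensemble is not anti-distinguishable, so $A^Q<1$; and condition \eqref{suff-cond-2} is exactly the Frobenius-norm criterion $\|\mathcal{G}\|_F\le n^N/\sqrt{2}$ applied to the Gram matrix of the $n^N$ product states, which certifies their anti-distinguishability and hence $\mathcal{\widetilde{S}_Q}=1$. The differences are in how two sub-steps are discharged, and both work in your favor. First, where the paper simply cites Theorem 4.1 of \cite{johnston2023tight} (describing \eqref{suff-cond-1} only as a ``necessary condition'' for non-anti-distinguishability, which read literally does not by itself yield $A^Q<1$), you prove the needed implication outright: your POVM argument with $E_x\ket{\psi_x}=0$, two applications of Cauchy--Schwarz, and the trace identity $\sum_x\Tr[E_xF]=\Tr F=n$ correctly establishes that any anti-distinguishable $n$-state ensemble satisfies $\sum_{j\neq l}|\la\psi_j|\psi_l\ra|\le n(n-2)$, so \eqref{suff-cond-1} indeed rules out anti-distinguishability. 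I checked the bookkeeping (the vanishing $x=j,l$ terms, the swap of summation order, and the $(n-1)$ versus $(n-2)$ accounting) and it is sound. Second, you obtain $\|\mathcal{G}\|_F^2=(n+\alpha)^N$ with $\alpha=\sum_{j\neq l}|\la\psi_j|\psi_l\ra|^2$ from the multiplicativity of the Frobenius norm under $\mathcal{G}=G^{\otimes N}$, which is cleaner than the paper's explicit binomial expansion over the number of differing tensor factors; the two computations agree. The one ingredient you leave as a black box---that $\|\mathcal{G}\|_F^2\le K^2/2$ suffices for anti-distinguishability of $K$ pure states---is exactly what the paper imports as Corollary 5.3 of \cite{johnston2023tight}, so you are on equal footing with the paper there and correctly identified it as the external lemma to be recalled.
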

\begin{proof}
In order to gain quantum advantage, i.e., $\left(\frac{A^C}{A^Q}\right)^N>1$, when $\mathcal{\widetilde{S}_C}=\mathcal{\widetilde{S}_Q}=1$, it requires $A^Q<1$ since $A^C$ must be $1$ due to Theorem \ref{th1}. This implies the states $\{\ket{\psi_j}\}_{j=1}^n$ should not be perfectly antidistinguishable. In this regard, there already exists a necessary condition \eqref{suff-cond-1} [Theorem $4.1$ of \cite{johnston2023tight}] for a set of $n$ states to be not antidistinguishable. 

On the other hand, $\mathcal{\widetilde{S}_Q}=1$ necessitates that the $n^N$ number of product states $\{\ket{\psi_{\vec{x}}}\}_{\vec{x}}$, where each  $\ket{\psi_{\vec{x}}}=\ket{\psi_{x_1}}\otimes\cdots\otimes\ket{\psi_{x_N}}$, be antidistinguishable at the receiver's end. In order to antidistinguish this set of $n^N$ states, it suffices to show that the Gram matrix $(G)$, constructed from these states, satisfies the following relation (Corollary $5.3$ of \cite{johnston2023tight}): \begin{align}\label{gram_matrix}
        ||G||_F\leq \frac{n^N}{\sqrt{2}}.
    \end{align} Here $||G||_F$ refers to the Frobenius norm of the Gram matrix associated with the states $\{\ket{\psi_{\vec{x}}}\}_{\vec{x}}$. Let, \begin{align}\sum_{j,l;j\neq l}|\langle\psi_{j}\ket{\psi_{l}}|^2=\alpha .\end{align} Using the definition of Gram matrix, we can write \begin{align}
        ||G||_F^2 &= \sum_{\vec{x},\vec{x'}}|\langle\psi_{\vec{x}}\ket{\psi_{\vec{x'}}}|^2 \nonumber\\ &= \sum_{k=0}^N\sum_{\vec{x}_k,\vec{x'}_k} |\langle\psi_{\vec{x}_k}\ket{\psi_{\vec{x'}_k}}|^2 \nonumber\\ &=\sum_{k=0}^N \prescript{N}{}{C}_k n^{N-k}\alpha^k \nonumber\\ &=(n+\alpha)^N \label{gram_matrix_square}.
    \end{align} 
    Here $k$ is the number of positions where $\vec{x}_k$ and $\vec{x'}_k$ differ from each other. $\vec{x}_k$ and $\vec{x'}_k$ belongs to the set of $N$-tuples $\{\vec{x}\}$. The third line follows from the fact that there are $\prescript{N}{}{C}_k$ possible combinations in which $\vec{x}_k$ and $\vec{x'}_k$ can differ. The rest of the $N-k$ entries can be the same in $n^{N-k}$ possible ways. Finally, taking all possible combinations where $k$ positions vary, the sum of the squared inner products becomes $\alpha^k$. Combining \eqref{gram_matrix} and \eqref{gram_matrix_square} we get \begin{align}
        (n+\alpha)^{N/2} &\leq \frac{n^N}{\sqrt{2}} \nonumber \end{align} which implies,\begin{align} \alpha &\leq \frac{n^2}{2^{1/N}}-n.
    \end{align} This completes the proof.
\end{proof}    

In the following, we provide explicit examples of quantum advantage for this communication task, where each sender has more than two inputs.

Consider the multipartite scenario comprising two senders, and each sender has three possible inputs. Based on input $x_i\in[3]$, the senders communicate $\{\ket{\psi_{x_i}}\}$ to the receiver, where 
\begin{eqnarray}\label{states2,3}
\ket{\psi_1} &=& \ket{0},\nonumber \\
\ket{\psi_2} &=& \cos\Bigg({\frac{5\pi}{18}}\Bigg)\ket{0} + \sin\Bigg({\frac{5\pi}{18}}\Bigg)\ket{1}, \nonumber \\
\ket{\psi_3} &=& \cos\Bigg({\frac{19\pi}{60}}\Bigg)\ket{0} + e^{\frac{i2\pi}{3}}\sin\Bigg({\frac{19\pi}{60}}\Bigg)\ket{1}.
\end{eqnarray}
It can be checked that these states are not antidistinguishable, using the necessary and sufficient conditions described in \cite{PhysRevA.66.062111}. A semidefinite program yields the value of antidistinguishability of these states to be $0.9798$.
The receiver can get nine different bipartite product states, i.e., $\ket{\psi_j}\otimes\ket{\psi_l}, ~\forall j,l\in[3]$.

A semidefinite program yields the antidistinguishability of these product states to be $1$. This implies $\mathcal{\widetilde{S}}_Q=1$. On the other hand, due to Theorem \ref{th1}, $\mathcal{\widetilde{S}_C}=1$ is obtained in classical communication when $A^C =1$. This results in a quantum advantage of the amount $(A^C/A^Q)^2 = 1/0.9798^2= 1.042$.

We further study three different multipartite communication scenarios for this task using SDP. The three scenarios, labeled by the total number of senders $N$ and the number of inputs $n$ of each sender, are $(N=2;n=3),(N=2; n =4)$ and $(N=3;n=3)$. Implementing the SeeSaw method, we observe that the respective $\mathcal{\widetilde{S}_Q}$ is maximum when all the senders share the same value of antidistinguishability for their inputs. This observation further motivates us to consider quantum strategies in which all senders share the same value of antidistinguishability. Considering $A_i=A^C$ for all $i$ in the expression of \eqref{thm1}, we get
\begin{align}
    A^C \geq 1-\left(1-\mathcal{\widetilde{S}_C}\right)^{\frac{1}{N}},
\end{align} 
where the same value of the success metric in classical communication is taken, that is, $\mathcal{\widetilde{S}}=\mathcal{\widetilde{S}_C}=\mathcal{\widetilde{S}_Q}$.
For different values of the success metric $\mathcal{\widetilde{S}}$, we study quantum advantage via the ratio $(A^C/A^Q)^N$ and plot the obtained values in Fig. \ref{game_antidis}. Among the three scenarios $(N=3;n=3)$ achieves highest quantum advantage as $(A^C/A^Q)^N \approx 1.7489$ for $\mathcal{\widetilde{S}}=1$.

\begin{figure}[h!]
    \centering
    \includegraphics[scale=0.45]{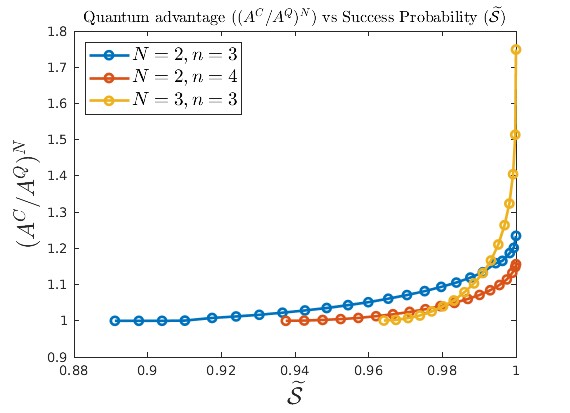}
    \caption{Quantum advantage vs success probability for the task of antidistinguishing distributed inputs with qubit states. Here $(A^C/A^Q)^N> 1$ signifies quantum advantage. } 
    \label{game_antidis}
\end{figure}

\section{Quantum Advantage and Epistemic incompleteness}\label{epistemic incompleteness}

A prepare-and-measure scenario involves a set of preparation and measurement procedures where preparation $(x)$ of a physical system is followed by a measurement $(y)$, resulting in an outcome $(z)$. An operational theory merely describes the probabilities $\{p(z|x,y)\}$ of a prepare-and-measure experiment. An ontological model of an operational theory attempts to offer an explanation of the outcome probabilities $p(z|x,y)$ assuming the existence of \textit{states of reality} also known as \textit{ontic states} $(\lambda)$. Ontic states describe the real state of affairs of the physical system irrespective of the later being subject to a measurement or not. In an ontological model each preparation corresponds to a probability distribution $\{\mu(\lambda|x)\}_{\lambda}$ over the space of ontic states $\Lambda :=\{\lambda\}$. Each measurement $y$ with outcome $z$ corresponds to a set of response functions $\{\xi(z|\lambda,y)\}_{\lambda}$. The operational probabilities are reproduced by averaging over the knowledge of $\lambda$: 
\begin{equation}
    p(z|x,y) = \int_{\Lambda} \mu(\lambda|x) \xi(z|y,\lambda) ~d\lambda.
\end{equation}
With the analog of multi-sender communication, here each preparation $\vec{x}=(x_1,...,x_N)$, of the combined physical system corresponds to a distribution $\mu(\vec{\lambda}|\vec{x})$ over the ontic space $\Lambda$, where $\vec{\lambda}=(\lambda_1,...,\lambda_N)$. As we consider all the senders to be independent, the assumption of \textit{preparation independence} \cite{Pusey_2012} asserts that
\begin{equation}
    \mu(\vec{\lambda}|\vec{x}) = \prod_{i=1}^N \mu(\lambda_i|x_i) ,
\end{equation}
along with the existence of a product space $\Lambda = \otimes_i \Lambda_i$, where $\lambda_i \in \Lambda_i$ is the ontic variable assigned to the preparation of sender $S_i$. Consequently, in such models, any operational probability can be obtained as
\begin{equation} \label{ei-pzxy}
    p(z|\vec{x},y) = \int_{\Lambda_1}d\lambda_1 \cdots \int_{\Lambda_N}d\lambda_N \left(\prod_i \mu(\lambda_i|x_i) \right) \xi(z|y,\vec{\lambda}) .
\end{equation} Any operational property of a set of preparations $\{x_i\}_{x_i=1}^{n_{x_i}}$ of $i$-th sender can be defined as 
\begin{equation}\label{obs property}
    \mathcal{P} = \max_{M} \sum_{x_i,z} c_{z,x_i} \ p(z|x_i,M) ,
\end{equation} where $\{c_{z,x_i}\in \mathbb{R}\}_{z,x_i}$ are real coefficients and the maximization is over all possible measurements. Now we define a notion of classicality in its most fundamental form. \begin{definition}[Epistemically complete model]
    An ontological model is said to be epistemically complete if the underlying epistemic states can fully explain the observed value $\mathcal{P}$ [defined in \eqref{obs property}] of a given operational property \cite{epi_incom}, essentially saying \begin{equation}
  \int_{\Lambda_i} \max_{z} \left\{ \sum_{x_i} c_{z,x_i} \mu(\lambda_i|x_i) \right\} ~d\lambda_i = \mathcal{P}. 
\end{equation}
\end{definition}
Consequently, an operational theory is said to be \textit{epistemically incomplete} if the epistemic states fail to reproduce the value $\mathcal{P}$ [defined in Eq. \eqref{obs property}] of an observational property.
Now, if we consider the operational property of a set of preparations $\{x_i\}$ to be the distinguishability $D_i$, then an epistemically complete model reproduces the observed value as
\begin{equation} \label{ei-D}
    \int_{\Lambda_i} \max_{x_i} \left\{ q_{x_i} \mu(\lambda_i|x_i) \right\}~d\lambda_i = D_i.  
\end{equation}
This notion of classicality \eqref{ei-D} was introduced in \cite{bod} by the name of \textit{bounded ontological distinctness}. Similarly, if we take the operational property of a set of preparations $\{x_i\}$ to be the antidistinguishability $A_i$, then an epistemically complete model reproduces the observed value as 
\begin{equation}\label{ei-A}
 1 -  \int_{\Lambda_i} \min_{x_i} \left\{ q_{x_i} 
 \mu(\lambda_i|x_i) \right\}~d\lambda_i = A_i. 
\end{equation}
Such an ontological model studied in \cite{ray2024epistemicmodelexplainantidistinguishability}, is analogous to the notion of \textit{maximally epistemic model} concerning the common overlap of a set of preparations. Therefore, in general, any successful epistemically complete ontological model satisfying preparation independence should reproduce the operational probabilities \eqref{ei-pzxy} in a communication scenario constrained either by distinguishability or antidistinguishability by satisfying the conditions, either \eqref{ei-D} or \eqref{ei-A} respectively, for all $i\in [N]$. This is equivalent to multipartite classical communication, wherein the operational probability is obtained by Eq.~\eqref{pcl} under the communication constraints of either \eqref{dis} or \eqref{antii}, for distinguishability -or antidistinguishability-constrained scenario, respectively. Based on this discussion, we arrive at the following observation. \begin{obs}
    Any advantage in a multipartite communication scenario under distinguishability or antidistinguishability constraint implies epistemic incompleteness under the assumption of preparation independence.  
\end{obs}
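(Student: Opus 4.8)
The plan is to identify the collection of operational tables $\{p(z|\vec{x},y)\}$ that admit an epistemically complete ontological model obeying preparation independence with the classical communication set $\mathbb{D}_C$ (resp. $\mathbb{A}_C$) for the same distinguishability (resp. anti-distinguishability) data, and then to obtain a contradiction from the existence of a quantum advantage. Recall that an advantage means there is a linear figure of merit \eqref{genI} whose value on some quantum operational table strictly exceeds the classical optimum $\mathcal{S}_C$ (equivalently $D_C^{\text{total}}/D_Q^{\text{total}}>1$). I will show that every table reproducible by a complete, preparation-independent model attains at most $\mathcal{S}_C$ on any such figure of merit; hence no complete, preparation-independent model can reproduce the advantageous table, i.e. the underlying operational theory is epistemically incomplete.

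The first step is a dictionary between the two frameworks. Given a preparation-independent ontological model reproducing $\{p(z|\vec{x},y)\}$ via \eqref{ei-pzxy}, read the local ontic variable $\lambda_i$ as sender $S_i$'s message $m_i$, the epistemic state $\mu(\lambda_i|x_i)$ as the encoding $p_e(m_i|x_i)$, and the response function $\xi(z|y,\vec{\lambda})$ as the decoding $p_d(z|\vec{m},y)$; then \eqref{ei-pzxy} becomes precisely the classical decomposition \eqref{pcl}, and conversely every classical protocol is such a model. The second step turns epistemic completeness into the communication constraint: for any measurement $M$ and any $i$, using $p(z_i|x_i,M)=\int \mu(\lambda_i|x_i)\xi(z_i|\lambda_i,M)\,d\lambda_i$ together with $\sum_{z_i}\xi(z_i|\lambda_i,M)=1$, one gets $\sum_{z_i}\max_{x_i} q_{x_i}p(z_i|x_i,M)\le \int \max_{x_i}\{q_{x_i}\mu(\lambda_i|x_i)\}\,d\lambda_i$, so the operational distinguishability of $S_i$'s preparations never exceeds the ontological quantity on the right, and \eqref{ei-D} is exactly the statement that this right-hand side equals the value $D_i$. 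In the message representation this says $\sum_{m_i}\max_{x_i} q_{x_i}p_e(m_i|x_i)= D_i$, in particular $\le D_i$, i.e. the classical constraint \eqref{dis} holds. The anti-distinguishability case is identical with $\min$ in place of $\max$ and the sign flips of \eqref{ei-A} and \eqref{antii}. Hence a complete, preparation-independent model yields a table lying in $\mathbb{D}_C$ (resp. $\mathbb{A}_C$) for the prescribed data, so its value on \eqref{genI} is at most $\mathcal{S}_C$; combined with $\mathcal{S}_Q>\mathcal{S}_C$ this gives the contradiction and proves the observation.

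One technical point must be handled along the way: $\Lambda_i$ may be continuous, whereas $\mathbb{D}_C$ is defined with finite messages. This is resolved exactly as in Appendix~C of \cite{Tavakoli2022informationally} (the argument that bounds $n_{m_i}\le 2^{n_{x_i}-1}$): any operational table produced by a continuous-$\Lambda$ encoding respecting the constraint already lies in the finite-message polytope, whose vertices use finitely supported encodings. I regard the equivalence of the two frameworks as the main obstacle, and within it two subtleties: (i) correctly matching the \emph{equality} defining epistemic completeness, \eqref{ei-D}/\eqref{ei-A}, to the one-sided communication constraint \eqref{dis}/\eqref{antii} — note in particular that the measurement realizing a sender's operational (anti-)distinguishability need not belong to the receiver's set, which is precisely why the constraint must be imposed on the local epistemic states $\mu(\lambda_i|x_i)$ rather than read off directly from $\{p(z|\vec{x},y)\}$; and (ii) the continuous-to-finite reduction on the ontic spaces. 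Preparation independence is essential throughout: it is what replaces the larger operational set $\mathbb{D}$ by $\mathbb{D}_C$, and without it one would only conclude $\{p(z|\vec{x},y)\}\in\mathbb{D}$, against which quantum correlations enjoy no advantage.
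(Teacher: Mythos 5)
Your proposal is correct and follows essentially the same route as the paper: the paper's argument is precisely the dictionary $\lambda_i \leftrightarrow m_i$, $\mu(\lambda_i|x_i)\leftrightarrow p_e(m_i|x_i)$, $\xi(z|y,\vec{\lambda})\leftrightarrow p_d(z|\vec{m},y)$, under which preparation independence turns \eqref{ei-pzxy} into \eqref{pcl} and epistemic completeness \eqref{ei-D}/\eqref{ei-A} into the classical constraints \eqref{dis}/\eqref{antii}, so that any advantage contradicts the existence of such a model. Your write-up merely makes explicit two points the paper leaves implicit (the inequality showing operational (anti-)distinguishability is controlled by the ontological overlap, and the continuous-to-finite message reduction), both of which are handled correctly.
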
 As demonstrated earlier in Sec.\ref{sec3}, quantum advantage in multipartite communication scenarios studied in this article implies epistemic incompleteness of quantum theory under the preparation independence assumption.

It is worth mentioning that Spekkens' notion of contextuality \cite{PhysRevA.71.052108,PhysRevLett.102.010401} in prepare-and-measure scenarios can be viewed as a special case of a distinguishability-constrained communication task. Preparation noncontextuality requires that the epistemic states associated with two operationally equivalent preparations, say $x$ and $x'$ must be identical. Formally, if for all $z,M,$ $p(z|x,M)=p(z|x',M)$, then preparation noncontextuality demands $\forall \lambda$, $\mu(\lambda|x)=\mu(\lambda|x')$. Now consider Eq. \eqref{ei-D}. Whenever two preparations for any sender are completely indistinguishable, that is, operationally equivalent, the distinguishability $D$ becomes $\max\{q_x,q_{x'}\}$. In turn, the relation \eqref{ei-D} holds only if $\forall \lambda,$ $p(\lambda|x) = p(\lambda|x')$, which is exactly the preparation-noncontextuality condition.

\section{conclusion} 

In this article we have investigated multipartite communication involving multiple senders and one receiver, under the constraints imposed by either the distinguishability or the antidistinguishability of each sender's inputs. Most importantly, the receiver is not fed with any external inputs, unlike the RAC scenario studied in \cite{PRR_2024}. As the constraints are independent of the dimensionality of the communicated systems, the observed quantum advantage in communication tasks captures a fundamental feature of multipartite communication, having no analog to single sender and single receiver tasks. Comparison between distinguishability- and dimensionality-constrained approaches is discussed in \cite{PRR_2024}. Multiple-copy state discrimination problem can be thought of as a special case of this task, assuming each sender prepares the same state corresponding to similar input indices. On a separate note, contextuality scenarios can be thought of as a special case of distinguishability based approaches \cite{bod}. 

To provide a complete characterization of the set of classical correlations, we develop a method to derive the facet inequalities that define its boundary. These inequalities are explicitly computed in elementary scenarios involving two senders and no input at the receiver. We have performed a comprehensive study of quantum correlations using semidefinite programming to identify violations of classical bounds. 

Further, in order to demonstrate an unbounded quantum advantage, we have examined the task of antidistinguishing the inputs distributed across all senders. First, we have derived upper bounds on the success metric of this task in classical communication. Using the well-known Pusey-Barrett-Rudolph theorem, we have shown that when each sender has two possible inputs, the quantum advantage increases exponentially with the number of senders. Additionally, we have established the correspondence between quantum advantage in such multipartite scenarios and the foundational notion of epistemic incompleteness.

Our results open several promising directions for future research, some of which are listed here. While we have demonstrated exponentially increasing quantum advantage with the number of senders, the possibility of obtaining such an advantage with a finite number of senders and no input on the receiver remains an open question. Extending our analysis to more general communication networks can reveal more interesting features of quantum communication. The role of shared entanglement can be studied in enhancing classical communication protocols within our framework, which may offer deeper insights into the interplay between quantum resources in classical communication and quantum advantage.  Deriving criteria, on which facet inequalities are most likely to witness any quantum advantage, may reveal deeper insights into the relation between quantum advantage and the geometry of classical correlation sets. 
The relation of quantum advantage in contextual scenarios with the epistemic incompleteness is an interesting future direction \cite{shahandeh}. Finally, investigating quantum advantage in more practically motivated communication scenarios may help bridge the gap between foundational research and real-world quantum technologies.

\subsection*{Acknowledgment}
This work is supported by STARS (STARS/STARS-2/2023-0809), Government
of India. SH acknowledges funding from the Ministry of Electronics and Information Technology (MeitY), Government of India, under Grant No. 4(3)/2024-ITEA. AP thanks UGC, India for Junior Research Fellowship.

\bibliography{ref}

\appendix
\onecolumngrid
\section{No advantage scenarios}\label{no_adv}

\subsection{$\left(2,2,2\right) $ scenario}

This is the simplest scenario in multipartite communication. For two choices of input of each sender, the distinguishability and antidistinguishability constraints (\ref{op3}) and (\ref{anti2}) are essentially the same, effectively saying $D_1 = A_1, D_2 = A_2$ and $\mathbb{D}_C^+ = \mathbb{A}_C^+$.
We obtained $18$ inequalities out of which $10$ are trivial, and the rest can be classified into a single equivalence class upon applying symmetry conditions. The following inequality represents the rest of the eight nontrivial inequalities 

 \begin{align}
     p(1|2,1)-p(1|2,2) \leqslant 2D_{2}-1.
 \nonumber\end{align}
    The symmetry conditions applied here are as follows : $1\xlongleftrightarrow {x_1} 2, 1\xlongleftrightarrow{x_2} 2, 1\xlongleftrightarrow {z} 2 $ and $(x_1\xlongleftrightarrow{}x_2) \wedge (D_1\xlongleftrightarrow{}D_2)$, where  $j\xlongleftrightarrow {x_i} k$ indicates equivalence between two possible choices  
$j,k$ of the variable $x_i$. $x_1\xlongleftrightarrow{}x_2$ represents the symmetry of exchanging the labels between two senders. The notation $\wedge$ between two symmetries implies that they are applied together. This notation is followed throughout the article. 
There is no quantum violation of this inequality, i.e, $\mathcal{S_Q}= \mathcal{S_C}$ and $\mathbb{D}_C =\mathbb{D}_Q =\mathbb{A}_C =\mathbb{A}_Q$ for any valid choice of the distinguishability or antidistinguishability variables.

\subsection{$\left(2,2,3\right) $ scenario}
In this scenario with bounds either on distinguishability or antidistinguishability of the sender's input,  Table \ref{223dist} shows the obtained facet inequalities of both 
 $\mathbb{D}_C^+$ and $\mathbb{A}_C^+$, essentially saying $\mathbb{D}_C^+=\mathbb{A}_C^+$. As the notion of distinguishability and antidistinguishability are equivalent for two inputs, here also we have $D_1=A_1,D_2=A_2$. 


\begin{table}[H]
    \centering
    \begin{tabular}{|c|c|}
    \hline
         Orbit size& Facet  Inequalities  \\
         \hline
         24&$-p\left(2|2,1\right)+p\left(2|2,2\right)-p\left(3|2,1\right)+p\left(3|2,2\right)\leqslant 2D_{2}-1$  \\
         \hline
         48& $-2p\left(2|1,1\right)+p\left(2|2,1\right)+2p\left(2|1,2\right)-p\left(2|2,2\right)+2p\left(3|1,2\right)-2p\left(3|2,2\right) \leqslant 4D_{1}+2D_{2}-3 $ \\
         \hline
         24&$-p\left(2|2,1\right)+p\left(2|1,2\right)-p\left(3|1,1\right)+2p\left(3|1,2\right)-p\left(3|2,2\right) \leqslant2D_{1}+2D_{2}-2$  \\
         \hline
         24& $-2p\left(2|2,1\right)+2p\left(2|1,2\right)-p\left(3|1,1\right)+p\left(3|2,1\right)+3p\left(3|1,2\right)-3p\left(3|2,2\right) \leqslant 4D_{1}+4D_{2}-4 $ \\
         \hline
         
    \end{tabular}
\caption{$134$ facet inequalities were obtained, out of which $14$ are trivial and the rest can be classified into four different equivalence classes listed above. Applied symmetry conditions are 
    $1 \xlongleftrightarrow{x_1} 2$,  $1 \xlongleftrightarrow{x_2} 2$,  $1 \xlongleftrightarrow{z} 2$, $2 \xlongleftrightarrow{z} 3$, $1 \xlongleftrightarrow{z} 3$, $(x_1\xlongleftrightarrow{}x_2) \wedge (D_1\xlongleftrightarrow{}D_2)$.}
    \label{223dist}
\end{table}

There is no quantum violation for any of the facet inequalities listed in Table \ref{223dist}. This suggests that the set of quantum correlations and classical correlations might be the same, i.e, $\mathbb{D}_C = \mathbb{D}_Q=\mathbb{A}_C = \mathbb{A}_Q$ for any valid choices of $D_1$ and $D_2$.

\subsection{$\left(3,2,2\right)$ antidistinguishability scenario }
In this scenario with bounds on antidistinguishability of sender's input, Table \ref{table 2} enlists the set of facet inequalities characterizing the classical set $\mathbb{A}_C^+$. We obtain no quantum violation in this (3,2,2) antidistinguishability scenario.
 

\begin{table}[H]
    \centering
    \begin{tabular}{|c|c|c|}
    \hline
      Orbit size   & Facet Inequalities   \\
         \hline
        12 & $p(1|1,1)-p(1|2,1) \leqslant 3A_{1}-2$ \\
        \hline
        6 & $-p(1|1,1)+p(1|1,2)\leqslant 2A_{2}-1$  \\
        \hline
        12& $-p(1|2,1)+2p(1|3,1)-2p(1|1,2)+p(1|2,2)\leqslant 6A_{1}+2A_{2}-5$  \\
        \hline
    \end{tabular}
\caption{44 facet inequalities were obtained, out of which $14$ are trivial and the rest can be classified into three different equivalence classes. Applied symmetry conditions are  $1 \xlongleftrightarrow{x_1} 2$; $2 \xlongleftrightarrow{x_1} 3$; $1 \xlongleftrightarrow{x_1} 3$ ;$1 \xlongleftrightarrow{x_2} 2$; $1 \xlongleftrightarrow{z} 2$.}
\label{table 2}
\end{table}

\section{Advantageous scenarios}\label{app_2}

\subsection{$\left(3,2,2\right)$ distinguishability scenario} 

A multipartite communication scenario with bounded distinguishability on both senders' input. Table \ref{table3} enlists all the facet inequalities of $\mathbb{D}_C^+$. The first two inequalities in Table \ref{table3} do not produce any quantum violation, i.e., $\mathcal{S_Q}=\mathcal{S_C}$. The behaviors of quantum violation of $\mathcal{I}_2,\mathcal{I}_3$ and $\mathcal{I}_4$ are depicted in Fig. \ref{Fig3}.  

\begin{table}[H]
    \centering
    \small
    \renewcommand{\arraystretch}{1.9}
    \begin{tabular}{|c|c|c|}
    \hline
          Orbit & Facet  Inequalities & Maximum obtained quantum advantage  \\
         size &  & in terms of $\frac{D_C^{\text{total}}}{D_{Q_4}^{\text{total}}}$ for fixed $\mathcal{S}$ \\
         \hline
         6 & $-p(2|3,1)+p(2|3,2) \leqslant 2D_2-1$ & No advantage\\ 
         \hline
         12 & $p(2|1,1)-p(2|2,1) \leqslant 3D_1-1$ & No advantage \\ 
         \hline
         24 &  $\mathcal{I}_1 = p(2|1,1)-3p(2|2,1)+p(2|3,1)-p(2|1,2)+p(2|2,2)+$  & $\frac{D_C^{\text{total}}}{D_{Q_4}^{\text{total}}} =1.0517$ at $\mathcal{S}=2.1339$ \\ 
         & $p(2|3,2) \leqslant 6D_1+2D_2-3$ & \\
         \hline
         24 & $\mathcal{I}_2 = 2p(2|1,1)-2p(2|2,1)+p(2|3,1)-2p(2|1,2)+p(2|3,2) \leqslant 6D_1+2D_2-3$ & $\frac{D_C^{\text{total}}}{D_{Q_4}^{\text{total}}}=1.06093$ at $\mathcal{S}=3.1579$ \\
         \hline
         12 & $\mathcal{I}_3 = p(2|1,1)-3p(2|2,1)+3p(2|3,1)+3p(2|1,2)-p(2|2,2)$ &$\frac{D_C^{\text{total}}}{D_{Q_4}^{\text{total}}}=1.04715$ at $\mathcal{S}=4.3975$ \\ 
         & $-3p(2|3,2) \leqslant 12D_1+2D_2-5$ & \\
         \hline
         24 & $\mathcal{I}_4 = -p(2|1,1)+p(2|2,1)-p(2|1,2)-p(2|2,2)+p(2|3,2) \leqslant 3D_1-1$ & $\frac{D_C^{\text{total}}}{D_{Q_4}^{\text{total}}}= 1.0121$ at $\mathcal{S}=0.0323$\\ 
         \hline
         \end{tabular}
    \caption{
    $116$ facet inequalities were obtained, among which $14$ are trivial and the rest can be grouped into six distinct equivalence classes. Applied symmetry conditions are  $1\xlongleftrightarrow{x_1}2 $;$2\xlongleftrightarrow{x_1}3 $;$1\xlongleftrightarrow{x_1}3 $; $1\xlongleftrightarrow{x_2}2 $; $1\xlongleftrightarrow{z}2 $.}
    \label{table3}
\end{table}

\begin{figure}[htpb]
    \centering
    \begin{subfigure}[b]{0.5\textwidth}
        \centering
        \includegraphics[scale=0.6]{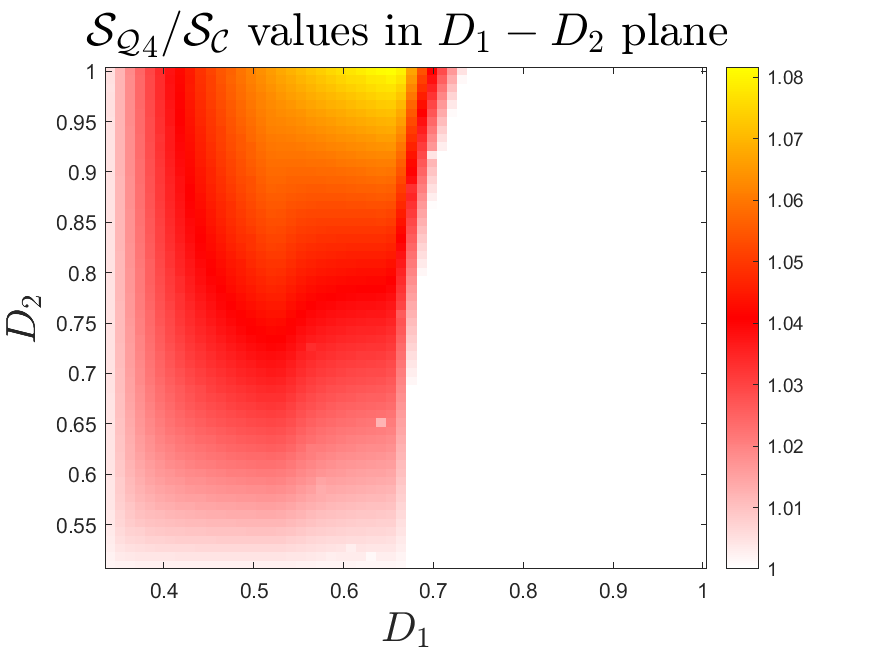}
        \caption{}
         \label{3b}
    \end{subfigure}
     \hfill
    \begin{subfigure}[b]{0.49\textwidth}
        \centering
        \includegraphics[scale=0.6]{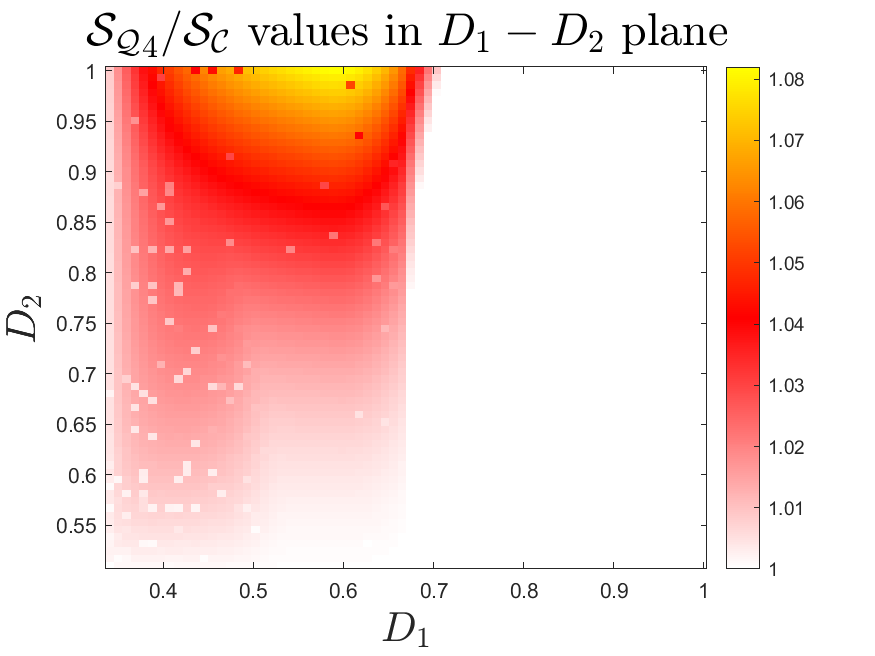}
        \caption{}
         \label{3c}
    \end{subfigure}
   
\caption{Landscape of $\mathcal{S}_{\mathcal{Q}_4}/\mathcal{S_C}$ over all valid choices of $D_1$ and $D_2$ in $(3,2,2)$ distinguishability scenario. Figures \eqref{3b} and \eqref{3c} respectively correspond to the three inequalities $\mathcal{I}_2$ and $\mathcal{I}_3$ in Table \ref{table3}. In particular, we consider values of $D_1 \in [1/3,1]$ and $D_2 \in [1/2,1]$ sampled at small intervals. For each pair $(D_1,D_2)$, we compute the left-hand side of the corresponding inequality using the aforementioned SeeSaw optimization technique applied to four-dimensional quantum systems. The classical bound is determined by substituting the same values of $D_1$ and $D_2$ into the classical bound of the inequality. We subsequently calculate the ratio of the quantum value to the classical bound and represent this ratio using a color-coded grid, where each grid point corresponds to a specific pair of $D_1$ and $D_2$ values. The blurred white dots appear due to precessional error in SeeSaw optimization.}   
    \label{Fig3}
\end{figure} 





\subsection{$\left(2,2,4\right)$ scenario }
Multipartite communication scenario, with bounds on distinguishability (or anti-distinguishability) of the senders' input. Note that, in this scenario, we have $D_1=A_1, D_2=A_2$ and $\mathbb{D}_C^+=\mathbb{A}_C^+$. Table \ref{table4}, contains all the obtained facet inequalities characterizing the classical set $\mathbb{A}_C^+$. Quantum violation is obtained only for three inequalities, $\mathcal{I}_5,\mathcal{I}_6$ and $\mathcal{I}_7$, in Table \ref{table4}, for which an extensive study is shown in Fig. \ref{fig224}.

\begin{table}[H]
    \centering
    \scriptsize
    \renewcommand{\arraystretch}{1.9} 
    \begin{tabular}{|c|c|c|}
    \hline
       Orbit & Facet- Inequalities & Maximum obtained quantum advantage  \\
        size  &  & in terms of $\frac{A_C^{\text{total}}}{A_{Q_4}^{\text{total}}}$ for fixed $\mathcal{S}$ \\
       \hline
       32  & $-p(1|1,1)+p(1|1,2)-p(2|1,1)+p(2|1,2)-p(3|1,1)+p(3|1,2) \leqslant 2A_2-1 $ & No advantage\\
       \hline
       96& $-2p(1|1,1)+p(1|2,1)+2p(1|1,2)-p(1|2,2)-p(2|2,1)+2p(2|1,2)-p(2|2,2) \leqslant 4A_1+2A_2-3$ & No advantage \\
       \hline


       96 & $\mathcal{I}_5= -p(1|1,1)+p(1|1,2)+p(2|1,1)-2p(2|2,1)+p(2|1,2)+2p(3|1,2)-2p(3|2,2) \leqslant 4A_1+2A_2-3$ & $\frac{A_C^{\text{total}}}{A_{Q_4}^{\text{total}}}=1.0066$ at $\mathcal{S}=1.0655$\\
       \hline
       
       192 &  $-p(1|2,1)-p(1|1,2)+2p(1|2,2)-2p(2|1,1)+p(2|2,1)+p(2|2,2)-2p(3|1,2)+2p(3|2,2) \leqslant 4A_1+2A_2-3$ & No advantage \\
       \hline


       96 & $-2p(1|1,2)+2p(1|2,2)-2p(2|1,2)+2p(2|2,2)-2p(3|1,1)+p(3|2,1)+p(3|2,2) \leqslant 4A_1+2A_2-3$ & No advantage\\
       \hline

       
       96 & $p(1|1,1)-2p(1|2,1)+p(1|1,2)+2p(3|1,2)-2p(3|2,2) \leqslant 4A_1+2A_2-3$ & No advantage\\
       \hline

       
       192 & $-2p(1|1,2)+2p(1|2,2)+p(2|1,1)-p(2|2,1)-p(2|1,2)+p(2|2,2)+p(3|1,1)-2p(3|2,1)-p(3|1,2)+$ & No advantage\\
        & $2p(3|2,2) \leqslant 4A_1+2A_2-3$ & \\
       \hline


       48& $-p(1|1,1)+p(1|2,2)+p(2|1,1)-p(2|2,1)-p(2|1,2)+p(2|2,2) \leqslant2A_1+2A_2-2$ & No advantage\\
       \hline


       24 & $p(1|1,1)-3p(1|2,1)-p(1|1,2)+3p(1|2,2)-2p(2|1,1)+2p(2|2,2)+p(3|1,1)-p(3|2,1)-3p(3|1,2)+$ & No advantage \\
       & $3p(3|2,2) \leqslant 4A_1+4A_2-4$ & \\
       \hline

       
       192 & $p(1|2,1)-2p(1|1,2)+p(1|2,2)+p(2|2,1)-p(2|2,2)+p(3|1,1)-p(3|2,1)-p(3|1,2)+p(3|2,2) \leqslant 4A_1+2A_2-3$ & No advantage \\
       \hline

       96 & $p(1|1,1)-3p(1|2,1)-p(1|1,2)+3p(1|2,2)-2p(2|2,1)+2p(2|2,2)-2p(3|1,1)+2p(3|2,2) \leqslant 4A_1+4A_2-4$ & No advantage\\
       \hline

       48 & $-p(1|1,1)+p(1|2,1)+p(1|1,2)-p(1|2,2)+p(2|2,1)-p(2|1,2)+p(3|2,1)-p(3|1,2) \leqslant2A_1+2A_2-2$ & No advantage \\
       \hline



       48 & $3p(1|1,1)-3p(1|2,1)-p(1|1,2)+p(1|2,2)+2p(2|1,1)-2p(2|2,2)+3p(3|1,1)-3p(3|2,1)-p(3|1,2)+$ & No advantage \\
       & $p(3|2,2) \leqslant 4A_1+4A_2-4$ & \\
       \hline

       
       96& $ \mathcal{I}_6= p(1|1,2)-p(1|2,2)-p(2|1,1)+p(2|2,1)+p(2|1,2)-p(2|2,2)+p(3|2,1)-p(3|2,2) \leqslant 2A_1+2A_2-2$ &  $\frac{A_C^{\text{total}}}{A_{Q_4}^{\text{total}}}=1.117$ at $\mathcal{S}=1.02$  \\
       \hline




       192 & $ -2p(1|1,1)+2p(1|2,1)+p(1|1,2)-p(1|2,2)+2p(2|2,1)-p(2|1,2)-p(2|2,2)+p(3|2,1)-p(3|1,2) \leqslant 2A_1+4A_2-3$ &No advantage\\
       \hline

       


       192 & $p(1|1,1)-3p(1|2,1)-p(1|1,2)+3p(1|2,2)-2p(2|2,1)+2p(2|2,2)+p(3|1,1)-p(3|2,1)-3p(3|1,2)+$ & No advantage \\
       & $3p(3|2,2) \leqslant 4A_1+4A_2-4$ & \\
       \hline

       192 & $\mathcal{I}_7= p(1|1,2)-p(1|2,2)-p(2|1,1)+2p(2|1,2)-p(2|2,2)-p(3|2,1)+p(3|1,2) \leqslant 2A_1+2A_2-2$ & $\frac{A_C^{\text{total}}}{A_{Q_4}^{\text{total}}}=1.0225$ at $\mathcal{S}=1.2166$ \\
       \hline

 
       96 & $p(1|1,1)-p(1|2,1)-3p(1|1,2)+3p(1|2,2)+p(2|1,1)-3p(2|2,1)-p(2|1,2)+3p(2|2,2)+p(3|1,1)-p(3|2,1)$ & No advantage\\
       & $-3p(3|1,2)+3p(3|2,2) \leqslant 4A_1+4A_2-4$ &  \\
       \hline

       96 &  $p(1|1,1)-p(1|2,1)-3p(1|1,2)+3p(1|2,2)-2p(2|1,1)+2p(2|2,2)-2p(3|2,1)+2p(3|2,2) \leqslant 4A_1+4A_2-4$ & No advantage\\
       \hline

       48& $p(2|1,1)-2p(2|1,2)+p(2|2,2)+p(3|1,1)-p(3|2,1)-p(3|1,2)+p(3|2,2) \leqslant 2A_1+2A_2-2$ & No advantage \\
       \hline

    \end{tabular}
\caption{We obtained $2210$ facet inequalities out of which $19$ are trivial and the rest can be grouped into $20$ different equivalence classes as listed above. Applied symmetry conditions are : $ 1\xlongleftrightarrow{x_1} 2$, $1\xlongleftrightarrow{x_2}2$ and $1\xlongleftrightarrow{z}2 $, $2\xlongleftrightarrow{z}3$, $1\xlongleftrightarrow{z}3$, $3\xlongleftrightarrow{z}4$, $ 1\xlongleftrightarrow{z}4$, $2\xlongleftrightarrow{z}4$, $(x_1\xlongleftrightarrow{}x_2) \wedge (A_1\xlongleftrightarrow{}A_2)$. }
\label{table4}
    
\end{table}

\begin{figure}[htbp!]
    \centering
    \begin{subfigure}[b]{0.5\textwidth}
        \centering
        \includegraphics[scale=0.6]{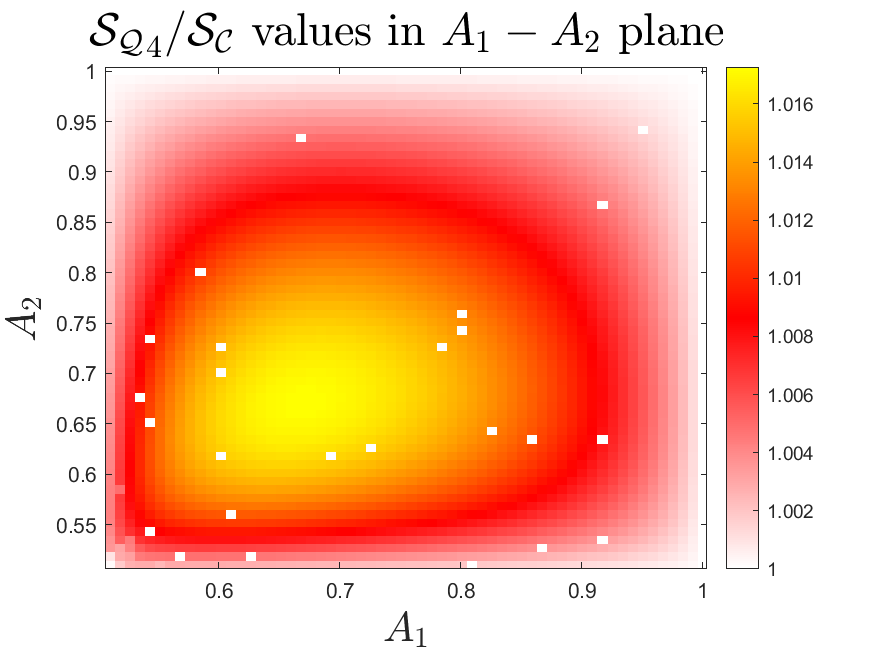}
        \caption{}
        \label{224subfigure1}
    \end{subfigure} 
    \hfill
    \begin{subfigure}[b]{0.49\textwidth}
        \centering
        \includegraphics[scale=0.6]{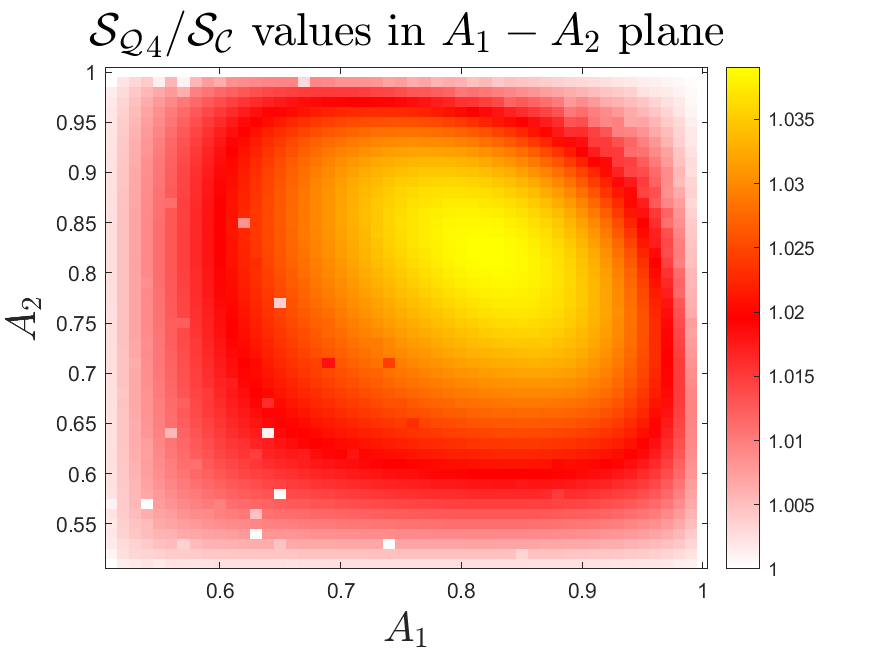}
        \caption{}
         \label{224subfigure 4}
    \end{subfigure}
    \hfill
    \begin{subfigure}[b]{0.5\textwidth}
        \centering
        \includegraphics[scale=0.6]{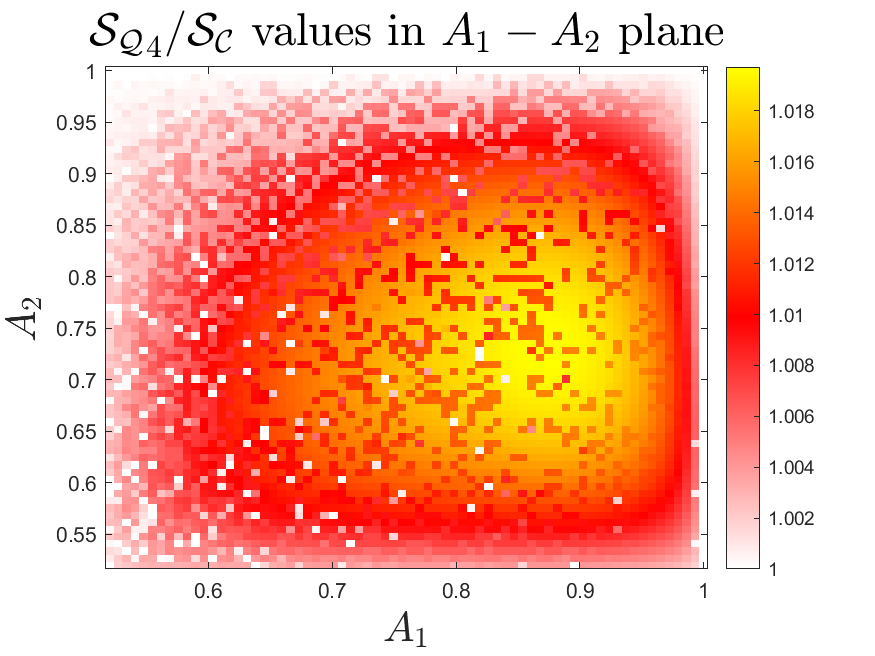}
        \caption{}
         \label{224subfigure 3}
    \end{subfigure}
    \caption{Quantum advantage in terms of $\mathcal{S}_{\mathcal{Q}_4}/\mathcal{S_C}$ for different values of $A_1$ and $A_2$ in $(2,2,4)$ scenario is presented. The Figures \eqref{224subfigure1},\eqref{224subfigure 4},\eqref{224subfigure 3}, respectively corresponds to the inequalities $\mathcal{I}_5,\mathcal{I}_6$ and $\mathcal{I}_7$ of Table \ref{table4}. 
    Here again, by considering values of $A_1$ and $A_2$ from the range [0.51,1] in small intervals, the optimal value of left-hand side of the corresponding inequality is computed using the Seesaw optimization technique. The classical bound is determined by substituting the same values of $A_1$ and $A_2$ into the classical bound of the inequality. Subsequently, the ratio of the quantum value to the classical bound is represented using the color-coded grid, where each grid point corresponds to a specific pair of $A_1$ and $A_2$ values. 
    }
    \label{fig224}
\end{figure}


\section{An explicit example of obtaining one vertex of $\mathbb{D}_C^+$ for $(3,2,2)$ distinguishability scenario}\label{app_3}
The extensive list of all the extremal classical strategies for the first sender, second sender and the receiver are available in Ref. \cite{ankush238_preparation_measurement_vertices_2025}. All of the enlisted strategies are obtained using `\texttt{lcon2vert}' package, readily available on `\texttt{Matlab}'. The first sender has $128$ extremal strategies, described in terms of the variables $\{p_e(m_1|x_1)\}, D_1$, where $x_1\in \{1,2,3\}$, can take three values and $m_1\in \{1,2,3,4\}$, can take four distinct values. Each of these strategies respects the constraints of (\ref{op4}),(\ref{en1}),(\ref{en2}) and (\ref{dis}). Each strategy is described by assigning certain values to the variables $\{p_e(m_1|x_1)\}, D_1$. Table \ref{322sender1table} describes one such extremal strategy for the first sender.

\begin{table}[H]

    \centering
    \begin{tabular}{|c|c|c|c|c|c|c|c|c|c|c|c|c|}
    \hline
       $p_e(1|1)$   & $p_e(2|1)$ & $p_e(3|1)$ & $p_e(4|1)$ & $p_e(1|2)$ & $p_e(2|2)$  & $p_e(3|2)$   &$p_e(4|2)$ & $p_e(4|3)$ & $p_e(4|3)$ & $p_e(4|3)$ & $p_e(4|3)$ & $D_1$    \\ \hline
       1 &  0 &  0 &   0 &  0 &  0&   0 &   1  &  0  &  0  &  0  &  1  &  2/3 \\ \hline
       \end{tabular}
       \caption{Extremal point of the encoding polytope associated with variables $\{p_e(m_1|x_1)\},D_1$.}
    \label{322sender1table} 
\end{table}

The second sender has six extremal  strategies, i.e., $p_e(m_2|x_2)\in\{0,1\}$, for $x_2 \in \{1,2 \}$ and $m_2\in \{1,2\}$. Note, only two choices for $m_2$ is sufficient here. Each of these strategies satisfies the constraints (\ref{op4}),(\ref{en1}),(\ref{en2}) and (\ref{dis}). Table \ref{322sender2table}, describes one such strategy for the second sender.

\begin{table}[htpb]
    \centering
    \begin{tabular}{|c|c|c|c|c|}
    \hline
     $p_e(1|1)$  & $p_e(2|1)$ & $p_e(1|2)$
     & $p_e(2|2)$ & $D_2$ \\
     \hline
    0 &1& 0& 1& 1/2 \\
    \hline
    \end{tabular}
    \caption{An extremal point of the encoding polytope associated with variables $\{p_e(m_2|x_2)\},D_2$ .}
\label{322sender2table}    
\end{table}

The extremal points of decoding polytope $\mathbb{M}$, i.e., the extremal strategies of receiver can be obtained by considering all possible deterministic output strategies, that is, $p_d(z|m_1,m_2) \in \{0,1\}$ such that $\sum_zp_d(z|m_1,m_2)=1$. There are a total $n_z^{n_{m_1}n_{m_2}}=2^{8}=256$ deterministic strategies. 
Table \ref{322measurementtable} presents one such extremal strategy for the receiver.

\begin{table}[htpb]
    \centering
    \scriptsize
    \renewcommand{\arraystretch}{1.9} 
    \begin{tabular}{|c|c|c|c|c|c|c|c|c|c|c|c|c|c|c|c|}
    \hline
      $p_d(1|1,1)$ & $p_d(1|1,2)$ & $p_d(1|2,1)$ & $p_d(1|2,2)$ & $p_d(1|3,1)$ & $p_d(1|3,2)$ & $p_d(1|4,1)$ & $p_d(1|4,2)$ &  $p_d(2|1,1)$ & $p_d(2|1,2)$ & $p_d(2|2,1)$ & $p_d(2|2,2)$ & $p_d(2|3,1)$ & $p_d(2|3,2)$ & $p_d(2|4,1)$ & $p_d(2|4,2)$ \\ \hline 

      1& 1& 1& 1& 1& 1& 1& 1& 0 & 0 & 0 & 0 & 0 & 0 & 0 & 0 \\ \hline
    \end{tabular}
    \caption{Extremal point of the decoding polytope associated with the variables $\{p_d(z|m_1,m_2)\}$. }

\label{322measurementtable}    
\end{table}
Now we are in a position to calculate an extremal point of $\mathbb{D}^+_C$. 
Obtain the operational probabilities $\{p(z|x_1,x_2)\}$ for all possible values of $z,x_1$ and $x_2$ via,
%
\bea
p(z|x_1,x_2) = \sum_{\substack{m_1 \in \{1,2,3,4\} \\ m_2 \in \{1,2\}}} p_d(z|m_1,m_2) p_e(m_1|x_1) p_e(m_2|x_2),  \label{eqc1}
\eea using the values of $p_d(z|m_1,m_2), p_e(m_1|x_1)$ and $p_e(m_2|x_2)$ from Table \ref{322sender1table},\ref{322sender2table} and \ref{322measurementtable}, respectively. The resulting extremal point of $\mathbb{D}_C^+$ is represented in Table \ref{tab:placeholder}.

\begin{table}[h]
    \centering
    \begin{tabular}{|c|c|c|c|c|c|c|c|c|c|c|c|c|c|}
    \hline
      $p(1|1,1)$ & $p(1|2,1)$ & $p(1|3,1)$ & $p(1|1,2)$ & $p(1|2,2)$ & $p(1|3,2)$ & $p(2|1,1)$ & $p(2|2,1)$ &  $p(2|3,1)$ & $p(2|1,2)$ & $p(2|2,1)$ & $p(2|3,2)$ & $D_1$ &$ D_2$ \\ \hline

      1  &  1  &  1 &   1  &  1  &  1       &  0    &     0  &       0   &      0  &       0    &     0   & 2/3  & 1/2 \\ \hline
    \end{tabular}
    \caption{Extremal point of $\mathbb{D}_C^+$, obtained from Table \ref{322sender1table},\ref{322sender2table} and \ref{322measurementtable} following \eqref{eqc1}.}
    \label{tab:placeholder}
\end{table}
Similarly, consider all other possible choices of combinations from the lists containing the extremal points for first sender, second sender and the receiver respectively, and generate the remaining vertices. Once all the vertices are obtained, one can generate the facet inequalities using `\texttt{polymake}' that are already listed in Table \ref{table3}. 

\end{document}